\newcommand{\Real}{\mathbb{R}}
\newcommand{\Comp}{\mathbb{C}}
\newcommand{\seq}{\subseteq}
\newcommand{\pspace}{\Real^3 \setminus \{0\}}
\newcommand{\LittleGroup}[1]{H_{#1}}
\newcommand{\Lightcone}{\mathcal{L}_+}
\newcommand{\poincare}{\mathrm{ISO}^+(3,1)}
\newcommand{\Poincare}{Poincar\'{e} }
\newcommand{\SO}{\mathrm{SO}}
\newcommand{\so}{\mathfrak{so}}
\newcommand{\ISO}{\mathrm{ISO}}
\newcommand{\bfk}{\boldsymbol{k}}
\newcommand{\diff}{\mathrm{Diff}}
\newtheorem{theorem}{Theorem}
\newtheorem{corollary}{Corollary}
\newtheorem{lemma}[theorem]{Lemma}
\newtheorem{noGoTheorem}{No-Go Theorem}
\begin{document}

\title{Four no-go theorems on the existence of spin and orbital angular momentum of massless bosons}
\date{\today}

\author{Eric Palmerduca}
\email{ep11@princeton.edu}
\affiliation{Department of Astrophysical Sciences, Princeton University, Princeton, New Jersey 08544}
\affiliation{Plasma Physics Laboratory, Princeton University, Princeton, NJ 08543,
U.S.A}

\author{Hong Qin}
\email{hongqin@princeton.edu}
\affiliation{Department of Astrophysical Sciences, Princeton University, Princeton, New Jersey 08544}
\affiliation{Plasma Physics Laboratory, Princeton University, Princeton, NJ 08543,
U.S.A}

\begin{abstract}
The past decades have seen substantial interest in the so-called orbital angular momentum (OAM) of light, driven largely by its diverse range of applications. However, there are fundamental theoretical issues with decomposing the angular momentum of massless particles, such as photons, into spin (SAM) and orbital angular momentum parts. While the angular momentum of massive particles has a natural splitting into the Wigner SAM and OAM, there are numerous proposed splittings for photons and no consensus about which is correct. Moreover, it has been shown that most of the proposed SAM and OAM operators do not satisfy the defining commutation relations of angular momentum operators and are thus not legitimate splittings. Here, we prove that it is generally impossible to split the total angular momentum operator of massless bosons, such as photons and gravitons, into spin and orbital parts. We prove two further generalizations of this result, showing that there are no SAM-OAM splittings even if (1) the SAM operator generates non-internal symmetries or (2) if one allows the SAM and OAM operators to generate non-$\SO(3)$ symmetries.
\end{abstract}

\maketitle

\section{Introduction}
Since it was first introduced, the concept of the orbital angular momentum (OAM) of light has been a popular area of research \cite{Shen2019}. It has found diverse scientific and industrial applications including, but not limited to, imaging \cite{Furhapter2005,Tamburini2006}, optical communication \cite{Barreiro2008}, and the manipulation of microstructures \cite{Grier2003}. Following the detection of gravitational waves by LIGO \cite{GravWave2016Blackhole,GravWave2017NeutronStar}, researchers have also been interested in the OAM of gravitational waves \cite{Bialynicki-Birula_2016, Baral2020, Wu2023}. However, there are fundamental theoretical issues with splitting the angular momentum of massless bosons, such as photons and gravitons, into spin angular momentum (SAM) and OAM \cite{VanEnk1994,FernandezCorbaton2014,PalmerducaQin_PT,PalmerducaQin_GT}. Angular momentum operators, such as SAM and OAM operators, must be generators of three-dimensional rotational symmetries of the particle, and must therefore satisfy $\so(3)$ commutation relations \cite{VanEnk1994,FernandezCorbaton2014,Leader2019, PalmerducaQin_PT, PalmerducaQin_GT}. Furthermore, spin is associated with the internal degrees of freedom (DOFs) \cite{Shankar_QM}, so SAM operators must generate an $\SO(3)$ symmetry of the internal space of a particle. For massive particles, the angular momentum naturally splits into SAM and OAM. In contrast, there is no agreed upon method to define these operators for massless bosons, leading to a vast array of different proposed SAM and OAM operators \cite{Akhiezer1965,Jaffe1990,VanEnk1994,Chen2008,Wakamatsu2010,Bialynicki-Birula2011,Leader2013,Leader2014,Leader2016,Leader2019,Yang2022}. However, most of these operators have been shown to either not be well-defined or else not actually satisfy $\so(3)$ commutation relations, and are therefore not genuine angular momentum operators \cite{VanEnk1994,FernandezCorbaton2014,Leader2019,Yang2022,PalmerducaQin_PT,PalmerducaQin_GT}. We recently showed \cite{PalmerducaQin_PT} that in some of the proposed SAM-OAM splittings for photons \cite{VanEnk1994,Bialynicki-Birula2011}, the SAM operators actually generate an $\Real^3$ symmetry rather than an $\SO(3)$ symmetry, while the OAM operators do not generate any symmetry at all. In the case of gravitons, we proved the stronger result that there is no possible SAM-OAM splitting of the angular momentum \cite{PalmerducaQin_GT}. In this article, we generalize and extend these arguments, showing that the same is true for any massless particle. We do so by proving four no-go theorems. The first theorem uses geometric methods to show that the angular momentum operator of massless particles cannot split into nontrivial SAM and OAM operators. The third and fourth theorems show that various generalizations of the definition of SAM and OAM operators do not remove the obstruction. More specifically, the third no-go theorem shows that it is not possible to split the angular momentum operators of massless particles into generators of non-$\SO(3)$ symmetries, which, for example, was attempted in Refs. \cite{VanEnk1994,Bialynicki-Birula2011}. The fourth no-go theorem shows the stronger result that it is not possible to split the angular momentum operator of massless particles into two generators of any nontrivial energy-preserving symmetries, even if neither of those symmetries are internal.

While the proofs of the first, third, and fourth theorems are essentially geometric, we show in the second no-go theorem that the obstruction preventing the splitting of angular momentum can also be viewed as a consequence of the nontrivial topology of massless particles. This nontrivial topology reflects the intertwining of the internal and external DOFs which prevents an SAM-OAM decomposition. We also show how gauge redundant descriptions of massless particles can give the false impression that the angular momentum splits into SAM and OAM parts.

This article is organized as follows. In Sec. \ref{sec:Formalism}, we review the vector bundle formalism for elementary particles and discuss its equivalence with the usual Hilbert space representations. In Secs. \ref{sec:no_go_1}, \ref{sec:no_go_2}, \ref{sec:no_go_3},  and \ref{sec:no_go_4} we prove No-Go Theorems \ref{thm:nogo1}, \ref{thm:nogo2}, \ref{thm:nogo3}, and \ref{thm:nogo4}, respectively. In Sec. \ref{sec:gauge_redundancy}, we show how gauge redundant descriptions can lead to non-physical SAM-OAM splittings for massless particles. In Sec. \ref{sec:Fock_space} we discuss the validity of using single-particle states rather than multiparticle Fock states to prove the no-go theorems.

\section{Formalism}\label{sec:Formalism}
Conventionally, elementary particles are defined as unitary irreducible representations (UIRs) of the proper orthochronous \Poincare group $\poincare$ on a Hilbert space \cite{Wigner1939,Weinberg1995}. Equivalently, they can be considered as UIRs on a Hermitian vector bundle $\pi:E \rightarrow M$ where $E$ is the total space, $M$ is the momentum space of the particle, and $\pi$ is the projection $(k,v) \mapsto k$ \cite{Simms1968, Asorey1985, PalmerducaQin_PT, PalmerducaQin_GT}. It is typical to also refer to the total space $E$ as the vector bundle. We also refer to the vector $(k,v)$ by just $v$ when the base point $k$ is irrelevant. Let $\mathcal{B}$ be Minkowski space with signature $(-,+,+,+)$. For massive particles with $m>0$, the momentum space $M$ is the (positive) mass hyperboloid
\begin{equation}
    \mathcal{M}_m = \{k^\mu = (\omega, \boldsymbol{k}) \in \mathcal{B}: k^\mu k_\mu= -m^2\} \cong \mathbb{R}^3
\end{equation}
while for massless particles $M$ is the forward lightcone
\begin{align}
    \Lightcone &= \{k^\mu= (\omega, \boldsymbol{k}) \in \mathcal{B}: k^\mu k_\mu = 0, \omega > 0 \} \\
    &\cong \pspace.
\end{align}
Note that massless particles have no rest frame and thus $\omega =|\boldsymbol{k}|= 0$ is excluded from the lightcone. Parametrizing $\mathcal{M}_m$ and $\Lightcone$ by just their spatial parts $\boldsymbol{k}$ gives the diffeomorphisms with $\Real^3$ and $\pspace$, respectively.
The fiber at $k \in M$ is the vector space $E(k) \doteq \pi^{-1}(k)$. The momentum space $M$ describes the external DOFs of the particle while the fiber DOFs describe the internal state (i.e., $v\in E(k)$ describes the polarization). The vector bundle $E$ is Hermitian, meaning that for each $k$, there is a Hermitian inner product $\langle \cdot , \cdot \rangle$ on the fiber $E(k)$. That $E$ is a representation of $\poincare$ means that there are $\poincare$-actions $\Sigma$ and $\sigma$ on $E$ and $M$, respectively, such that for $\Lambda \in \poincare$, $\Sigma_\Lambda$ is a vector space isomorphism of $E(k)$ onto $E(\sigma_\Lambda k)$. Furthermore, $\Sigma$ and $\sigma$ respect the group structure of $\poincare$, that is,
\begin{align}
    \Sigma_{\Lambda_1 \Lambda _2} = \Sigma_{\Lambda_1}\Sigma_{\Lambda_2} \\
    \Sigma_{\Lambda ^{-1}} = (\Sigma_{\Lambda})^{-1},
\end{align}
and similarly for $\sigma$.
We will mostly consider bundles for which $\sigma$ is the standard action of $\poincare$ on momentum space, that is, when $\Lambda \in \SO^+(3,1)$ is a Lorentz transformation we simply have $\sigma_\Lambda k = \Lambda k$, while the inhomogeneous spacetime translations in $\poincare$ do not effect $k$. That the action $\Sigma$ is unitary means that
\begin{equation}
    \langle \Sigma_{\Lambda} v_1 , \Sigma_{\Lambda} v_2 \rangle = \langle v_1, v_2 \rangle
\end{equation}
for all $\Lambda \in \poincare$ and all $v_1, v_2 \in E(k)$. The irreducibility of the action means that there are no proper subbundles $E'$ (with nonzero rank) of $E$ which are preserved by $\Sigma$. These definitions generalize the notion of vector space representations to vector bundle representations. One can recover the conventional Hilbert space representations $\tilde{\Sigma}$ of the particles by considering the induced UIR on the Hilbert space $L^2(E)$ of square-integrable sections of $E$ \cite{Simms1968,PalmerducaQin_PT}. In particular, for a wavefunction $\psi(k) \in L^2(E)$, the action is
\begin{equation}
    [\tilde{\Sigma}_{\Lambda} \psi](k) \doteq \Sigma_\Lambda [\psi(\sigma_{\Lambda^{-1}} k)].
\end{equation}
The inner product of $E$ induces an inner product on $L^2(E)$ given by
\begin{equation}
    \langle \psi_1 , \psi_2 \rangle = \int_M \langle \psi_1(k), \psi_2(k) \rangle d\xi
\end{equation}
where $d\xi$ is the Lorentz invariant measure on $M$; if $M = \Lightcone$, then $d\xi = |\boldsymbol{k}|^{-1} d^3\boldsymbol{k}$. Under this product, $\tilde{\Sigma}$ is a UIR of $\poincare$ on the Hilbert space $L^2(E)$. The Lie group representations $\Sigma$ and $\tilde{\Sigma}$ of $\poincare$ induce  corresponding Lie algebra representations $\eta$  and $\tilde{\eta}$ of $\mathfrak{iso}(3,1)$ describing the infinitesimal group actions. For $\mathfrak{g} \in \mathfrak{iso}(3,1)$, $\eta_\mathfrak{g}$ is the tangent vector field on $E$ given by
\begin{equation}\label{eq:eta_g_bundle}
    \eta_\mathfrak{g}(k,v) \doteq \frac{d}{dt}\Big|_{t=0}\Sigma_{\exp(i t\mathfrak{g})}(k,v) \in T_{(k,v)}E
\end{equation}
and $\tilde{\eta}_\mathfrak{g}$ is an operator on $L^2(E)$ given by
\begin{equation}
    \tilde{\eta}_\mathfrak{g} \psi = \frac{d}{dt}\Big|_{t=0}\tilde{\Sigma}_{\exp(i t\mathfrak{g})}\psi.
\end{equation}
Note that $\eta_{\mathfrak{g}}$ is an operator in the sense that vector fields on $E$ are differential operators on the space of smooth $\Comp$-valued functions on $E$. The (total) angular momentum operators of these representations, $\boldsymbol{J} = (J_1,J_2,J_3)$ and $\boldsymbol{\tilde{J}} = (\tilde{J}_1, \tilde{J}_2, \tilde{J}_3)$, are the generators of rotation, corresponding to $\eta$ and $\tilde{\eta}$ when the Lie algebra $\mathfrak{iso}(3,1)$ is restricted to $\so(3)$. They satisfy the $\so(3)$ commutation relations 
\begin{align}
    [J_m,J_n] = i\epsilon_{mnp}J_p \label{eq:J_bundle_comm}\\ 
    [\tilde{J}_m, \tilde{J}_n] = i\epsilon_{mnp}\tilde{J}_p \label{eq:J_Hilbert_comm}
\end{align}
where the bracket in the first equation is the Jacobi-Lie bracket and that in the second equation is the usual operator commutator.\footnote{Note that we use the physics convention for the Lie algebra, in which $\so(3)$ corresponds to unitary transformations. Mathematicians use a different convention in which $\so(3)$ corresponds to anti-unitary transformations. In that convention, the factors of $i$ would not appear in equations (\ref{eq:eta_g_bundle})-(\ref{eq:J_Hilbert_comm}) For a short discussion, see Section 3.4 Ref. \cite{Hall2015}.} We restrict our discussion to non-projective representations of $\poincare$, which corresponds to treating only elementary particles with integer spin or helicity, i.e., bosons. All  elementary massless particles in the Standard Model are bosons, so this assumption has little effect on the generality of our results. We note also that a three-component operator $\boldsymbol{V}$ which satisfies 
\begin{equation}\label{eq:vector_operator_condtion}
    [V_m,J_n] = i\epsilon_{mnp}V_p
\end{equation}
transforms under rotations like a spatial vector, and is thus called a vector operator \cite{Hall2013, Hall2015}.

The bundle action given by $\Sigma$ and $\boldsymbol{J}$ describes the action on single-particle states while $\tilde \Sigma$ and $\tilde{\boldsymbol{J}}$ describe the action on wave functions; they contain equivalent information. While the action on wave functions is more commonly used, we will mostly work with the bundle action as it leads to simpler proofs of the no-go theorems. The reason is that these proofs make frequent use of  sharp momentum states, that is, states which have a single momentum $p \in M$. These are simply the vectors in $E$ in the bundle representation, while in the wave function representation these are $\delta$-function-like states. However, Flato et al. \cite{Flato1983} showed that for massless particles these sharp states are not standard delta functions, but rather twisted delta (``twelta'') functions, which require more care to work with. We can avoid these complications by working in the bundle representation.

\section{Massless particles do not have SAM or OAM: A geometric proof}\label{sec:no_go_1}

The aim of this article is to study possible decompositions of the total angular momentum $\boldsymbol{J}$ into nonzero vector operators $\boldsymbol{S}$ and $\boldsymbol{L}$:
\begin{equation}\label{eq:canon_split}
    \boldsymbol{J} = \boldsymbol{S} + \boldsymbol{L}.
\end{equation}
In this article, we take the defining property of angular momentum operators to be that they generate $\SO(3)$ symmetries, as is standard in the literature \cite{VanEnk1994,FernandezCorbaton2014,Hall2013,Leader2014,Leader2019,PalmerducaQin_PT,PalmerducaQin_GT}. This ensures that standard results about angular momentum operators hold such as the addition of angular momentum equation, the Clebsch-Gordan formalism, the Casmimir invariance of the $S^2$ and $L^2$, and the angular momentum multiplet structure of the eigenspaces of $(L^2,L_z)$ and $(S^2,S_z)$. We note that there has been some work in which this requirement on angular momentum operators is not imposed \cite{Yang2022,Das2024} and that our results do not apply in such a generalized setting. If $\boldsymbol{S}$ and $\boldsymbol{L}$ are to be considered angular momentum operators under the standard definition, they must (up to Lie algebra isomorphism) satisfy \cite{VanEnk1994,Hall2013,Yang2022,PalmerducaQin_PT,PalmerducaQin_GT}
\begin{align}
    [S_m,S_n] &= i\epsilon_{mnp} S_p \label{eq:S_vect_op}\\
    [L_m,L_n] &= i\epsilon_{mnp} L_p. \label{eq:L_vect_op}
\end{align}
These generate corresponding bundle actions $(\Sigma^S,\sigma^S)$ and $(\Sigma^L, \sigma^L)$ of $\SO(3)$ on $\pi:E\rightarrow M$. Since SAM describes an $\SO(3)$ symmetry of the internal (fiber) DOFs, $\Sigma^S$ must not change the momentum, i.e., $\sigma^S_R(\boldsymbol{k}) = \boldsymbol{k}$ for every $R \in \SO(3)$. Such actions which do not change the fiber are called stabilizing. Under such a $\Sigma^S$, each fixed fiber $E(k)$ is a finite-dimensional vector space representation of $\SO(3)$.

We now illustrate that there is a natural splitting of the form (\ref{eq:canon_split}) for massive particles following from Wigner's little group method \cite{Wigner1939,Weinberg1995,Terno2003}, but that this construction breaks down for massless particles. For an action $\Sigma$ of $\poincare$ on $\pi:E\rightarrow M$ with $\sigma_{\Lambda} = \Lambda$, the little group at $k \in M$ is the subgroup $H_k$ of $\SO^+(3,1)$ consisting of elements which leave $k$ invariant \cite{Wigner1939,Weinberg1995}:
\begin{equation}
    \LittleGroup{k} \doteq \{\Lambda\in \mathrm{SO}^+(3,1)|\Lambda k= k\}.
\end{equation}
Up to group isomorphism, $\LittleGroup{k}$ is independent of $k$, so we can just refer to the little group $H$ \cite{Weinberg1995}. There is a canonically induced unitary action of $H$ on $E$ which is stabilizing \cite{PalmerducaQin_PT}. For a particle $\pi_m:E_m \rightarrow \mathcal{M}_m$ with mass $m>0$, the little group is $\SO(3)$. Indeed, at $\boldsymbol{k} = \boldsymbol{0}$, we have that $H_{\boldsymbol{0}} = \SO(3) \seq \ISO^+(3,1)$ is the canonical copy of $\SO(3)$ in the \Poincare group consisting of pure spatial rotations. At any other $\boldsymbol{k}$, let $\Lambda_{\boldsymbol{k}}$ be the unique boost taking $(m,\boldsymbol{0})$ to $(\sqrt{m^2 + |\boldsymbol{k}|^2},\boldsymbol{k})$. At $\boldsymbol{k}$, the little group is
\begin{equation}
    H_{\boldsymbol{k}} = \Lambda_{\boldsymbol{k}} H_{\boldsymbol{0}} \Lambda_{-\boldsymbol{k}} = \Lambda_{\boldsymbol{k}} H_{\boldsymbol{0}} \Lambda_{\boldsymbol{k}}^{-1} \cong \SO(3).
\end{equation}
The isomorphism with $\SO(3)$ follows from the fact that conjugate subgroups are isomorphic, and we see that each $\boldsymbol{k}$ corresponds to a different copy of $\SO(3)$ embedded as a subgroup the \Poincare group. Under the action $\Sigma$, every fiber $E_m(\boldsymbol{k})$ is a vector space UIR of $H_k \cong \SO(3)$. Vector space UIRs of $\SO(3)$ are uniquely determined up to isomorphism by their dimension $2s+1$ where $s$ is the non-negative integer spin, so the rank of $r$ of the vector bundle $E_m$ determines $s$ via $r=2s+1$.\footnote{$s$ could also be a half integer if we were considering fermions and projective representations.} These vector space representations on the fibers fit together to form a unitary action $\Sigma^{m,S}$ of $H = \SO(3)$ on $E_m$ given by
\begin{equation}
    \Sigma^{m,S}_R(\boldsymbol{k},v) = \Sigma(\Lambda_{\boldsymbol{k}} R \Lambda_{-\boldsymbol{k}})(\boldsymbol{k},v)
\end{equation}
where $R \in \SO(3)$ and $(\boldsymbol{k},v) \in E_m$. This action describes boosting into the particle's rest frame, applying the rotation $R$, and then boosting back to the original momentum. The generator $\boldsymbol{S^m}$ of this action is known in the literature as the Wigner spin operator \cite{Terno2003}. From the vector space representation theory of $\SO(3)$, we can choose a basis $(v_1,\ldots,v_{2s+1})$ for the fiber at $\boldsymbol{k} = 0$ such that the action of $H_{\boldsymbol{0}}$ is generated by the spin $s$ matrices $\boldsymbol{S}_s$. Then we can express $\boldsymbol{S}^m$ in terms of the spin $s$ matrices by 
\begin{equation}\label{eq:spin_s_matrices}
    \boldsymbol{S}^m = \Sigma(\Lambda_{\boldsymbol{k}}) \boldsymbol{S}_s \Sigma(\Lambda_{-\boldsymbol{k}}).
\end{equation}
The form of $\boldsymbol{S}^m$ can be simplified even further by labeling all polarization states by their rest frame polarizations \cite{Wigner1939,Weinberg1995}, that is, we define a basis of $E(\boldsymbol{k})$ at arbitrary $\boldsymbol{k}$ by
\begin{equation}\label{eq:Wigner_basis}
    (\boldsymbol{k},v_i) \doteq \Sigma(\Lambda_{k})(\boldsymbol{0},v_i).
\end{equation}
It follows from Eq. (\ref{eq:spin_s_matrices}) that
\begin{equation}
    \boldsymbol{S}^m(\boldsymbol{k},v_i) = \boldsymbol{S}_s(\boldsymbol{k},v_i).
\end{equation}
so that in this basis we have
\begin{equation}\label{eq:spin_s_matrices_2}
    \boldsymbol{S^m} = \boldsymbol{S}_s.
\end{equation}
We see, either from Eq. (\ref{eq:spin_s_matrices}) or (\ref{eq:spin_s_matrices_2}) or from the fact that $\boldsymbol{S}^m$ generates an $\SO(3)$ symmetry, that  $\boldsymbol{S}^m$ satisfies the $\so(3)$ commutation relations in Eq. (\ref{eq:S_vect_op}). Furthermore, $\boldsymbol{S}^m$ generates an internal symmetry since $\Sigma^S$ does not change $\boldsymbol{k}$. It also is a well-defined vector operator, satisfying Eq. (\ref{eq:vector_operator_condtion}) (\cite{Terno2003}, Eq. (12)). Thus, $\boldsymbol{S}^m$ is a well-defined spin angular momentum operator. Since $\boldsymbol{J}$ and $\boldsymbol{S}^m$ satisfy $\so(3)$ commutation relations, so does 
\begin{equation}\label{eq:Wigner_OAM}
    \boldsymbol{L}^m \doteq \boldsymbol{J} - \boldsymbol{S}^m,
\end{equation}
completing the Wigner SAM-OAM decomposition for massive particles. Note that $\boldsymbol{S}^m$ and $\boldsymbol{L}^m$ also commute with the Hamiltonian \cite{Terno2003}, which follows from the fact that $\boldsymbol{S}^m$ and $\boldsymbol{J}$ (and therefore also $\boldsymbol{L}^m$) generate symmetries which preserve the energy $\sqrt{m^2 + \boldsymbol{k}^2}$. Thus, $\boldsymbol{S}$ and $\boldsymbol{L}$ are conserved in the free theory and, for example, one can use them to label quantum states. We note that it is sometimes stated in the literature that there are issues with relativistic spin and orbital angular momentum even for massive particles \cite{LandauLifshitzQED,PeskinAndSchroeder}, namely, that the spin and orbital angular momentum do not commute with the Hamiltonian and thus do not constitute good quantum numbers. This is true of the Dirac SAM and OAM operators but not of the Wigner SAM and OAM operators; see Ref. \cite{Terno2003} for details of this distinction. The essential idea is that the Dirac spin operator results from working in the bispinor representation in which particle (positive energy) and anti-particle (negative energy) states are treated simultaneously. The total angular momentum operator independently preserves particle and anti-particle states, but the Dirac spin operator mixes the positive and negative energy states and thus does not commute with the Hamiltonian. This does not occur for the Wigner operators $\boldsymbol{S}^m$ and $\boldsymbol{L}^m$ since the bundle $E_m$ contains only particle or anti-particle states (not both simultaneously).

The above procedure breaks down for bundle representations $\pi_0:E_0\rightarrow \Lightcone$ of massless particles. In this case there is no rest frame with $\boldsymbol{k} = 0$, and the little group is now $\ISO(2) = \Real^2 \rtimes \SO(2)$ \cite{Wigner1939,PalmerducaQin_PT}, the group of inhomogeneous isometries of $\Real^2$ where $\rtimes$ denotes the semidirect product. It is well-known that if inhomogeneous elements $\Real^2 \seq \ISO(2)$ act nontrivially, then one obtains particles with an infinite number of internal DOFs \cite{Wigner1939, Weinberg1995}. There are no known particles with such an internal space and these solutions are generally believed to be unphysical. We thus make the standard assumption that the inhomogeneous elements act trivially. We therefore obtain a canonical stabilizing little group representation of $\SO(2)$ on $E_0$. Each fiber is a finite-dimensional representation of $\SO(2)$ and such representations are all one-dimensional and are uniquely labeled by a (possibly negative) integer helicity $h$ rather than a spin \cite{Wigner1939,Weinberg1995,PalmerducaQin_PT}. Since this is not an $\SO(3)$ representation, the generator of this action is not an SAM operator. A clear illustration of this fact is that $\SO(2)$ is a one-dimensional Lie group, so its generator (the helicity operator) is a scalar operator rather than a vector operator, so it obviously cannot represent SAM.

Although this argument is suggestive, it does not rule out the possibility that SAM and OAM operators could be defined by some other procedure. Indeed, the many other proposed SAM and OAM operators are such attempts \cite{Akhiezer1965,Jaffe1990,VanEnk1994,Chen2008,Wakamatsu2010,Bialynicki-Birula2011,Leader2013,Leader2014,Leader2016,Leader2019,Yang2022}. However, we prove in the following theorem that in general there is no SAM-OAM decomposition for massless particles. The argument uses the fundamental difference of the dimensionalities of the spin and helicity representations. Indeed, the vector space spin representations of $\SO(3)$ have dimension $2s+1$ while all of the helicity representations of $\SO(2)$ are one-dimensional. Thus the fibers of massless particles are all one-dimensional, that is, massless particles are represented by line bundles \cite{PalmerducaQin_PT, PalmerducaQin_helicity}. For example, we previously showed that the right (R) and left (L) circularly polarized photons each form a UIR of $\poincare$ on line bundles \cite{PalmerducaQin_PT}, and likewise for the R and L gravitons \cite{PalmerducaQin_GT}.

\begin{noGoTheorem}\label{thm:nogo1}
    Suppose $\pi_0:E_0\rightarrow \Lightcone$ is a massless particle, that is, it is a UIR of $\poincare$. It is not possible to split the total angular momentum $\boldsymbol{J}$ into nonzero SAM and OAM operators $\boldsymbol{S}$ and $\boldsymbol{L}$, where $\boldsymbol{S}$ and $\boldsymbol{L}$ generate $\SO(3)$ symmetries and $\boldsymbol{S}$ corresponds to an internal symmetry.
\end{noGoTheorem}
\begin{proof}
    Suppose such a decomposition exists. $\boldsymbol{S}$ generates an $\SO(3)$ action $\Sigma^S$ on $E_0$. Since this symmetry is internal, $\Sigma^S$ does not change the fiber coordinate $k \in \Lightcone$, and thus $\Sigma^S$ is a vector space representation of $\SO(3)$ on each fiber $E_0(k)$. By the little group construction, we saw that since $E_0$ is a UIR of $\poincare$ it must be a line bundle, so $E_0(k)$ is a one-dimensional vector space. The only representation of $\SO(3)$ on a one-dimensional vector space is the spin $0$ representation, which is trivial in the sense that $\SO(3)$ always acts by the identity. Thus $\Sigma^S = I$ and $\boldsymbol{S} = 0$, and therefore $\boldsymbol{J} = \boldsymbol{L}$. There is thus no nontrivial SAM-OAM decomposition.
\end{proof}

Theorem \ref{thm:nogo1} explains the issues that plague the various proposed SAM and OAM operators for photons. Van Enk and Neinhaus \cite{VanEnk1994} first showed that one attempt to define SAM and OAM for light fail to satisfy $\so(3)$ relations and are therefore not actually SAM and OAM operators; we showed \cite{PalmerducaQin_PT} that the same is true of the decomposition proposed by Bialynicki-Birula and Bialynicka-Birula \cite{Bialynicki-Birula2011}. Recently, Yang et al. \cite{Yang2022} showed through direct calculation that the SAM-OAM decompositions of Jaffe-Manohar \cite{Jaffe1990}, Chen et al. \cite{Chen2008}, and Wakamatsu \cite{Wakamatsu2010} do not satisfy $\so(3)$ relations, leading again to the conclusion that none of these are legitimate SAM and OAM operators. Yang et al. propose two new sets of SAM and OAM operators $(\boldsymbol{S}_M,\boldsymbol{L}_M)$ and $(\boldsymbol{S}_M^{\text{obs}},\boldsymbol{L}_M^{\text{obs}})$; see Table II in Ref. \cite{Yang2022}. In accord with No-Go Theorem \ref{thm:nogo1}, neither of these sets of operators are actually well-defined SAM and OAM operators for photons by the standard definition. The first set satisfy $\so(3)$ relations but, as Yang et al. point out, they are not gauge invariant and are thus not measurable. As such, $\boldsymbol{S}_M$ and $\boldsymbol{L}_M$ are not well-defined vector operators on the space of physical states; this is discussed in more detail in Sec. \ref{sec:gauge_redundancy}. The other set of operators, $(\boldsymbol{S}_M^{\text{obs}},\boldsymbol{L}_M^{\text{obs}})$, are gauge invariant, but the $S_{M,i}^{\text{obs}}$ commute with each other rather than satisfy cyclic $\so(3)$ relations, and are therefore not angular momentum operators by the standard definition.

\section{Nontrivial topology as an obstruction to an SAM-OAM decomposition}\label{sec:no_go_2}
The failure of the angular momentum of massless particles to split into SAM and OAM can be understood both in terms of the geometry and topology of such particles. The obstruction originates in the singular limit that occurs as the particle mass is taken to zero. We showed in the previous section that there is a singularity in the \Poincare geometry that occurs in this limit as the little group jumps abruptly from $\SO(3)$ to $\ISO(2)$. This leads to massive particles possessing spin while massless particles possess helicity, and the latter is not associated with an angular momentum operator. 

However, the $m\rightarrow 0$ limit is also accompanied by a topological singularity as the momentum space $M$ jumps from being a topologically trivial (contractible) mass hyperboloid $\mathcal{M}_m$ to the non-contractible lightcone $\Lightcone$. Since $\mathcal{M}_m$ is contractible to a point, there are only topologically trivial vector bundles over $\mathcal{M}_m$ (\cite{Bott2013}, Corollary 6.9), and thus all massive particles $\pi_m : E_m \rightarrow \mathcal{M}_m$ are described by trivial bundles. However, $\Lightcone$ is not contractible due to the hole at $\boldsymbol{k}=0$, and thus there are topologically nontrivial vector bundles over $\Lightcone$. Indeed, a massless particle with nonzero helicity is described by a vector bundle $\pi_h:\gamma_h \rightarrow \Lightcone$ whose topology is fully characterized by its (first) Chern number $C(\gamma_h) = -2h$ \cite{PalmerducaQin_helicity}. When $h\neq 0$, $\gamma_h$ is topologically nontrivial. This nontrivial topology indicates that the internal DOFs parametrizing the fibers (polarizations) and the external DOFs parametrizing $\Lightcone$ are nontrivially twisted together \cite{Bott1965}. However, SAM and OAM are related to $\SO(3)$ symmetries of the internal and external DOFs of a particle, respectively. Thus, the nontrivial topology of massless particles with nonzero helicity can be heuristically understood as obstructing the existence of such a splitting. As for massive particles with zero spin, the $h=0$ case is uninteresting from the perspective of SAM and OAM as the unique SAM and OAM operators are simply $\boldsymbol{S} = 0$ and $\boldsymbol{L} = \boldsymbol{J}$. 

We now formalize this heuristic observation. In Theorem \ref{thm:nogo1}, we illustrated the geometric obstruction to an SAM-OAM splitting by imposing the condition that the SAM operator is internal. To illustrate the topological obstruction we will instead assume that the OAM operator is external, that is, it generates an action only changing the base manifold DOFs. In precise terms, suppose $\Sigma^L$ is a $G$-action on a bundle $\pi:E\rightarrow M$ with base manifold action $\sigma^L$. We say that $\Sigma^L$ is external if the condition $\sigma^L_{g_1}k = \sigma^L_{g_2}k$ for $g_1,g_2 \in G$ and $k \in M$ implies that $\Sigma^L_{g_1}(k,v) = \Sigma^L_{g_2}(k,v)$ for every $(k,v) \in E(k)$. This is equivalent to saying that if $\sigma^L_g k = k$, then $\Sigma^L_g (k,v) = (k,v)$, i.e., the little group with respect to $\Sigma^L$ acts trivially on $E$. We refer to a Lie algebra representation as external if it generates an external Lie group representation.
\begin{noGoTheorem}\label{thm:nogo2}
    Suppose $\pi:E\rightarrow \Lightcone$ is a rank $r$ bundle with an external $\SO(3)$ action $\Sigma^L$. Assume the corresponding action $\sigma^L$ on $\Lightcone \cong \pspace$ is the standard base manifold action $\sigma^L_R = R$ for $R \in \SO(3)$. Then $E$ is a trivial bundle. Consequently, a massless particle admits an external OAM operator $\boldsymbol{L}$ if and only if the particle has zero helicity.
\end{noGoTheorem}
\begin{proof}
    Consider the restriction of $E$ to the unit sphere $S^2 \seq \pspace \cong \Lightcone$, denoted by $E|_{S^2}$. $\Lightcone$ deformation retracts onto $S^2$, so $E$ is trivial if and only if $E|_{S^2}$ is trivial (\cite{HatcherVBKT}, Cor 1.8). By the assumption on $\sigma^L$, $\Sigma^L$ restricts to an action on $E|_{S^2}$. Fix some $k_0 \in S^2$ and for each $k \in S^2$, define
    \begin{equation}
        G_k = \{R \in \SO(3)|Rk_0 = k\}.
    \end{equation}
    Let $v_1,...,v_r$ be a basis of $E|_S^2$ at $k_0$, and define sections $f_1,...,f_r$ of $E|_{S^2}$ by 
    \begin{equation}
        f_j(k) \doteq \Sigma^L_{G_k}(k_0,v_j).
    \end{equation}
    The assumption that $\Sigma^L$ is external ensures that all elements of $G_k$ have the same action on $(k_0,v_j)$, so the $f_j$ are well-defined. Since $\Sigma^L$ is smooth and $G_k$ varies smoothly with $k$, the $f_j$ are smooth sections of $E|_{S^2}$. Furthermore, they are linearly independent at each $k$. To see this, choose some $g_k \in G_k$ and note that $\Sigma^L_{g_k}$ is a linear isomorphism between the fibers at $k_0$ and at $k$. Since $E$ possess $r$ linearly independent sections, it is a trivial vector bundle. Since a massless particle with nonzero helicity is topologically nontrivial \cite{PalmerducaQin_helicity}, such a particle does not admit an external OAM operator.
\end{proof}

No-Go Theorem \ref{thm:nogo2} shows that a massless helicity $h\neq 0$ particle $\pi_h:\gamma_h \rightarrow \mathcal{L}_+$ does not admit an OAM operator since $\gamma_h$ is a nontrivial bundle. 

In practice, it is common to consider helicity $h$ particles not in isolation, but in combination with the helicity $-h$ particles, that is, one considers the direct sum bundle $\gamma_{h,T} \doteq \gamma_h \oplus \gamma_{-h}$. For example, R and L photons are respectively described by bundles $\gamma_1$ and $\gamma_{-1}$ \cite{PalmerducaQin_PT}, but it is necessary to consider the total photon bundle $\gamma_T \doteq \gamma_{1} \oplus \gamma_{-1}$ if one wishes to describe linear polarizations. Since the first Chern numbers are additive with respect to the direct sum by the Whitney product formula \cite{Tu2017differential} and since $C(\gamma_h) = -2h$ \cite{PalmerducaQin_helicity}, it follows that $C(\gamma_{h,T}) = 0$. This implies that, unlike $\gamma_{\pm h}$, the bundle $\gamma_{h,T}$ is topologically trivial \cite{PalmerducaQin_PT, PalmerducaQin_helicity}. No-Go Theorem \ref{thm:nogo2} does not then immediately imply that $\gamma_{h,T}$ does not possess an OAM operator, although this is indeed the case.

Before proving this, we will give an informal argument showing why one would not expect an OAM operator to exist for $\gamma_{h,T}$. If we let $\Sigma^\pm$ denote the $\poincare$ symmetries of $\gamma_{\pm h}$, then $\Sigma = \Sigma^+ \oplus \Sigma^-$ is an $\poincare$ symmetry of $\gamma_{h,T} = \gamma_h \oplus \gamma_{-h}$. Suppose there were such an OAM operator $\boldsymbol{L}^T$ which generates an $\SO(3)$ symmetry $\Sigma^{L,T}$ on $\gamma_{h,T}$. By No-Go Theorem $\ref{thm:nogo2}$, $\Sigma^{L,T}$ is not a symmetry of either $\gamma_h$ or $\gamma_{-h}$, but only of their direct sum. However, $\gamma_{\pm h}$ are preserved under the actions $\Sigma^\pm$, so $\boldsymbol{L}^T$ cannot be written purely in terms of the generators of $\Sigma^\pm$, that is, in terms of the Hamiltonian, momentum, angular momentum, or boost operators on $\gamma_{\pm h}$. There are no other natural operators available, so this gives a strong indication that there is no way to construct an OAM operator $\boldsymbol{L}^T$. Indeed, $\gamma_{h}$ and $\gamma_{-h}$ are related by a parity symmetry $P$ which is part of the full \Poincare group, but not of the proper orthochronous \Poincare group $\poincare$ \cite{PalmerducaQin_PT}. Since $\boldsymbol{L}^T$ would have to generate a symmetry mixing $\gamma_h$ and $\gamma_{-h}$, this suggests $\boldsymbol{L}^T$ would involve the generator of the parity symmetry $P$. However, parity is a discrete rather than a continuous symmetry, and thus has no infinitesimal generator. This is essentially the reason why elementary particles are typically defined \cite{Weinberg1995} as UIRs of $\poincare$ rather than the full \Poincare group, since the Lie algebra does not detect discrete time and parity symmetries.

We can give a more rigorous argument showing that no such $\boldsymbol{L}^T$ exists as follows. Note that while No-Go Theorems \ref{thm:nogo1} and $\ref{thm:nogo2}$ are proved via  different methods, their conclusions are related. In particular, if the OAM operator $\boldsymbol{L}^T$ is external, then $\boldsymbol{S}^T = \boldsymbol{J}^T-\boldsymbol{L}^T$ is internal. We note then that No-Go Theorem $\ref{thm:nogo1}$ applies equally well to the bundle $\gamma_{h,T}=\gamma_h \oplus \gamma_{-h}$. The only change in the proof of No-Go Theorem \ref{thm:nogo1} is that the fibers now have dimension two rather than one. However, there are no two-dimensional UIRs of $\SO(3)$ and thus the representation generated by an internal $\boldsymbol{S}^T$ must break down into two one-dimensional UIRs of $\SO(3)$ \footnote{There is a two-dimensional projective UIR corresponding to spin 1/2 fermions, but this is irrelevant to the case of bosons, whose representations are non-projective.}. But the only such UIR is the trivial representation, so we must have $\boldsymbol{S}^T = 0$ and $\boldsymbol{L}^T =\boldsymbol{J}^T = \boldsymbol{J}^{(+h)} + \boldsymbol{J}^{(-h)}$, where $\boldsymbol{J}^{(\pm h)}$ are the total angular momentum operators of $\gamma_{\pm h}$. However, $\boldsymbol{J}^{(\pm h)}$, and therefore $\boldsymbol{J}^T$, do not generate external symmetries when $h \neq 0$ since the little group acts nontrivially, and thus, $\gamma_{h,T}$ does not admit an external OAM operator. We have thus proved the following corollary of the previous no-go theorems:
\begin{corollary}
    Let $\gamma_{\pm h}$ be line bundles describing massless particles with nonzero helicities $\pm h$ and with total angular momentum operators $\boldsymbol{J}^\pm$. Consider the total bundle $\gamma_{T,h} = \gamma_{h} \oplus \gamma_{-h}$
    with the induced angular momentum operator $\boldsymbol{J}^T = \boldsymbol{J}^{(+h)} + \boldsymbol{J}^{(-h)}$. If
    \begin{equation}
        \boldsymbol{J}^T = \boldsymbol{L}^T + \boldsymbol{S}^T,
    \end{equation}
    where $\boldsymbol{L}^T$ and $\boldsymbol{S}^T$ generate $\SO(3)$ symmetries, then:
    \begin{enumerate}
        \item If $\boldsymbol{S}^T$ is internal, then $\boldsymbol{S}^T = 0$.
        \item $\boldsymbol{L}^T$ cannot generate an external symmetry.
    \end{enumerate}
    Therefore, there is no nontrivial SAM-OAM decomposition for the system of particles $\gamma_T$.
\end{corollary}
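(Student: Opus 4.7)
The plan is to prove item~(1) by adapting No-Go Theorem~\ref{thm:nogo1} from rank one to rank two, and to deduce item~(2) from item~(1) by observing that an external $\boldsymbol{L}^T$ forces $\boldsymbol{S}^T$ to be internal. The final ``therefore'' clause then follows from the standard convention, recalled before No-Go Theorem~\ref{thm:nogo1}, that SAM must generate an internal symmetry: item~(1) shows any admissible SAM is zero, leaving only the trivial splitting $\boldsymbol{J}^T = \boldsymbol{L}^T$.

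For item~(1), since $\boldsymbol{S}^T$ is internal, the associated $\SO(3)$ action $\Sigma^{S,T}$ preserves each fiber of $\gamma_{T,h}$ and so defines a two-dimensional non-projective representation of $\SO(3)$ on $\gamma_{T,h}(\boldsymbol{k}) \cong \Comp^2$. Since $\SO(3)$ admits no two-dimensional non-projective irreducible representation --- the two-dimensional irreducible being the spin-$1/2$ projective representation, which is excluded here because we restrict to bosons --- this fiber representation must split into two one-dimensional UIRs, each of which is necessarily trivial. Hence $\Sigma^{S,T}$ is the identity on every fiber and $\boldsymbol{S}^T = 0$.

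For item~(2), I would argue by contradiction. Suppose $\boldsymbol{L}^T$ generates an external $\SO(3)$ symmetry whose base-manifold action on $\Lightcone$ is the standard rotation (as in No-Go Theorem~\ref{thm:nogo2}). Since $\boldsymbol{J}^T$ likewise generates standard rotations of $\Lightcone$, the difference $\boldsymbol{S}^T = \boldsymbol{J}^T - \boldsymbol{L}^T$ has zero infinitesimal action on the base and is therefore internal. Item~(1) then forces $\boldsymbol{S}^T = 0$, so $\boldsymbol{L}^T = \boldsymbol{J}^T = \boldsymbol{J}^{(+h)} + \boldsymbol{J}^{(-h)}$. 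However, on each summand $\gamma_{\pm h}$ the little group of $\boldsymbol{J}^{(\pm h)}$ acts on the one-dimensional fiber by the nontrivial helicity character whenever $h \neq 0$, contradicting the externality assumption. The main subtlety, rather than a serious obstacle, is that $\gamma_{T,h}$ is not $\poincare$-irreducible, so No-Go Theorem~\ref{thm:nogo1} cannot be invoked verbatim; the adaptation works because the only ingredient of that proof requiring irreducibility --- the one-dimensionality of the fibers --- is replaced in rank two by the representation-theoretic fact used in item~(1).
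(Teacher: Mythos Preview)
Your proposal is correct and follows essentially the same approach as the paper: adapt No-Go Theorem~\ref{thm:nogo1} to the rank-two fiber by noting that the only two-dimensional non-projective $\SO(3)$ representation is the sum of two trivial ones, then deduce item~(2) from item~(1) by observing that an external $\boldsymbol{L}^T$ with standard base action forces $\boldsymbol{S}^T$ internal, whence $\boldsymbol{L}^T = \boldsymbol{J}^T$, which is not external because the helicity little-group character is nontrivial for $h\neq 0$. You are slightly more explicit than the paper in flagging the standard-rotation assumption on $\sigma^L$ needed to conclude that $\boldsymbol{S}^T$ is internal, which is a welcome clarification rather than a departure.
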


\section{The angular momentum cannot be split into non-SO(3) symmetries}\label{sec:no_go_3}
No-Go Theorems \ref{thm:nogo1} and \ref{thm:nogo2} establish the fundamental result that it is not possible to split the angular momentum of a massless particle into genuine SAM and OAM operators. This reflects the fact that the internal and external DOFs cannot independently carry any nontrivial $\SO(3)$ symmetries. One might still ask if $\boldsymbol{J}$ can be split into vector operators describing some non-$\SO(3)$ symmetries.

As an illustration of why this is not possible, consider the proposed splitting \cite{VanEnk1994, Bialynicki-Birula2011}
\begin{align}
    \boldsymbol{J} &= \boldsymbol{J}_\parallel + \boldsymbol{J}_\perp \label{eq:helicity_decomp}\\
    \boldsymbol{J}_\parallel &\doteq (\boldsymbol{J}\cdot \hat{\bfk}) \hat{\bfk} \label{eq:spin_attempt}\\
    \boldsymbol{J}_\perp &\doteq \boldsymbol{J} - \boldsymbol{J}_\parallel. \label{eq:orbital_attempt}
\end{align}
$\boldsymbol{J} \cdot \hat{\bfk}$ is the helicity operator generating the internal $\SO(2)$-action which naturally arises from the little group construction for massless particles \cite{PalmerducaQin_PT}. The helicity is a scalar operator rather than a vector operator, and thus by itself cannot be a SAM operator. Equation (\ref{eq:spin_attempt}) can be viewed as an ad hoc attempt to ``vectorize'' the helicity operator into an SAM operator. However, by Theorem \ref{thm:nogo1} this $\boldsymbol{J}_\parallel$ cannot actually be an angular momentum operator. Indeed, while $\boldsymbol{J}_\parallel$ and $\boldsymbol{J}_\perp$ are well-defined vector operators satisfying condition (\ref{eq:vector_operator_condtion}), they satisfy the nonstandard commutation relations \cite{VanEnk1994,PalmerducaQin_PT}
\begin{align}
    [J_{\parallel,m},J_{\parallel,n}]&=0 \label{eq:Lie_1},\\ 
    [J_{\perp,m},J_{\parallel,n}]&= i \epsilon_{mnp}J_{\parallel,p} \label{eq:Lie_2},\\
    [J_{\perp,m},J_{\perp,n}]&= i \epsilon_{mnp}(J_{\perp,p}-J_{\parallel,p}) \label{eq:Lie_3}.
\end{align}
Aside from not generating $\SO(3)$ symmetries, we previously noted that these suffer from an additional issue \cite{PalmerducaQin_PT}. In particular, while the $J_{\parallel,m}$ operators commute with each other and thus generate an $\Real^3$ symmetry, the $J_{\parallel,p}$ on the rhs of Equation $(\ref{eq:Lie_3})$ shows that the components of $\boldsymbol{J}_\perp$ do not combine together to form a Lie subalgebra. Thus, $\boldsymbol{J}_\perp$ does not generate any symmetry at all. The next theorem shows that this issue is generic---it arises in any attempt to split $\boldsymbol{J}$ into generators of non-$\SO(3)$ symmetries, regardless of whether they generate internal and/or external symmetries.

\begin{noGoTheorem}\label{thm:nogo3}
    Suppose that
    \begin{equation}
        \boldsymbol{J} = \boldsymbol{L} + \boldsymbol{S}
    \end{equation}
    where $\boldsymbol{J}$ is the the total angular momentum operator and $\boldsymbol{L}$ and $\boldsymbol{S}$ are nonzero vector operators, that is, they satisfy
    \begin{subequations}\label{eq:ng2_vector_operator}
    \begin{align}
        [L_m,J_n] &= i\epsilon_{mnp}L_p \\
        [S_m,J_n] &= i\epsilon_{mnp}S_p.
    \end{align}
    \end{subequations}
    Furthermore, suppose that $\boldsymbol{L}$ and $\boldsymbol{S}$ are generators of symmetry groups $G_L$ and $G_S$ with associated Lie algebras $\mathfrak{g}_L$ and $\mathfrak{g}_S$. Then
    \begin{equation}
        \mathfrak{g}_L = \mathfrak{g}_S = \so(3).
    \end{equation}
    In particular, it is not possible for either $\boldsymbol{L}$ or $\boldsymbol{S}$ to generate non-$\SO(3)$ symmetries.
\end{noGoTheorem}
\begin{proof}
     $\boldsymbol{L}$ and $\boldsymbol{S}$ satisfy the commutation relations of their associated Lie algebras $\mathfrak{g}_L$ and $\mathfrak{g}_S$. Therefore,
    \begin{align}
        [L_m,L_n] = i \nu_{mnp}L_p \label{eq:ng2_L_comm}\\
        [S_m,S_n] = i \tau_{mnp}S_p \label{eq:ng2_S_comm}
    \end{align}
    where $\nu_{mnp}$ and $\tau_{mnp}$ are the structure constants of $\mathfrak{g}_L$ and $\mathfrak{g}_S$. It then follows that
    \begin{align}
        [L_m,L_n] &= [J_m - S_m, J_n - S_n] \\
        &= i\epsilon_{mnp}L_p + i(\tau_{mnp}-\epsilon_{mnp})S_p \\
        &= i\nu_{mnp}L_p.
    \end{align}
    Thus, either $\tau_{mnp} = \epsilon_{mnp}$ or else $S_p$ is a multiple of $L_p$. If $\tau_{mnp} = \epsilon_{mnp}$, then also $\nu_{mnp} = \epsilon_{mnp}$, and therefore $\boldsymbol{L}$ and $\boldsymbol{S}$ both satisfy $\so(3)$ commutation relations, proving the claim. Suppose then that $S_m = c_m L_m$ for some constants $c_m$. Note that $c_m \neq -1$ since $J_m \neq 0$. Then $L_m = (1+c_m)^{-1}J_m$. From Equation (\ref{eq:ng2_vector_operator}), it follows that
    \begin{align}
        [L_m,J_n] &= \frac{1+c_p}{1+c_m}i\epsilon_{mnp}L_p \\ 
        &= i\epsilon_{mnp}L_p,
    \end{align}
    so $c_m = c_p \doteq c$ for all $m,p$. Then
    \begin{align}
        [L_m,L_n] &= \frac{i}{1+c}\epsilon_{mnp}L_p.
    \end{align}
    By rescaling the basis to $L_m' = (1+c)L_m$ we obtain
    \begin{equation}
        [L_m',L_n'] = i\epsilon_{mnp}L_p',
    \end{equation}
    showing that $\mathfrak{g}_L = \so(3)$. An analogous argument shows that $\mathfrak{g}_S = \so(3)$.
\end{proof}
No-Go Theorem \ref{thm:nogo3} shows that if $\boldsymbol{S}$ or $\boldsymbol{L}$ generates a non-$\SO(3)$ symmetry, then the other operator will not generate any symmetry at all. Thus, such splittings of the angular momentum appear ill-suited to generalize SAM and OAM operators. 

We note that while the vector operator $\boldsymbol{J}_\perp$ in the splitting (\ref{eq:helicity_decomp})-(\ref{eq:orbital_attempt}) does not generate any 3D symmetry, any individual component, say $J_{\perp,3}$, commutes with itself. It therefore generates a 1D symmetry, and such symmetries can be of interest in interacting systems if they commute with the Hamiltonian, as discussed in Ref. \cite{FernandezCorbaton2014}.

\section{Generalization to non-internal SAM operators}\label{sec:no_go_4}
In this section we prove a stronger version of No-Go Theorem \ref{thm:nogo1} by relaxing the constraint that the SAM operator is stabilizing. Assume again that $\pi_0:E_0 \rightarrow \Lightcone$ and $\Sigma$ are a representation of a massless particle, and that
\begin{equation}\label{eq:ng3_splitting}
    \boldsymbol{J} = \boldsymbol{L} + \boldsymbol{S}
\end{equation}
where $\boldsymbol{L}$ and $\boldsymbol{S}$ are vector operators generating symmetries. By No-Go Theorem \ref{thm:nogo3}, these must generate $\SO(3)$ symmetries $\Sigma^{L}$ and $\Sigma^S$ on $E$. Let $\sigma$, $\sigma^L$, and $\sigma^S$ be the actions on the base manifold $\Lightcone$ corresponding to $\Sigma$, $\Sigma^L$, and $\Sigma^S$. For massless particles, the energy is given by $\omega = |\boldsymbol{k}|$. If one enforces the constraint that $\Sigma^S$ is internal, then $\sigma^S_R = I$ and $\sigma^L_R = \sigma_R = R$ for any $R \in \SO(3)$. In this case,
\begin{equation}\label{eq:energy_perserving}
|\sigma^{L}_R\boldsymbol{k}| = |\sigma^{S}_R\boldsymbol{k}| = |\boldsymbol{k}|,
\end{equation} 
so $\boldsymbol{L}$ and $\boldsymbol{S}$ both generate symmetries which preserve the energy. The first no-go theorem shows that there are no nontrivial splittings such that $\boldsymbol{S}$ is internal. We now prove this conclusion is still true if we replace the assumption that $\boldsymbol{S}$ is internal with the weaker assumption that $\boldsymbol{S}$ and $\boldsymbol{L}$ conserve energy in the sense of (\ref{eq:energy_perserving}). To this end, we first establish the following lemmas.
\begin{lemma}\label{lm:commutative}
    Suppose $\boldsymbol{L}$ and $\boldsymbol{S}$ are vector operators satisfying equation (\ref{eq:ng3_splitting}). Then $\boldsymbol{L}$ and $\boldsymbol{S}$ commute, and therefore so do $\sigma^L$ and $\sigma^S$.
\end{lemma}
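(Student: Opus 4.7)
The plan is to extract the commutativity of $\boldsymbol{L}$ and $\boldsymbol{S}$ directly from the vector operator condition combined with No-Go Theorem \ref{thm:nogo3}, and then lift this Lie-algebra-level statement to a statement about the induced Lie group actions on $\Lightcone$.

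First, I would exploit the fact that $\boldsymbol{L}$ is a vector operator together with the splitting $\boldsymbol{J} = \boldsymbol{L} + \boldsymbol{S}$. Writing $[L_m,J_n] = [L_m,L_n]+[L_m,S_n]$ and using the vector operator identity $[L_m,J_n] = i\epsilon_{mnp}L_p$ gives
\begin{equation}
[L_m,S_n] = i\epsilon_{mnp}L_p - [L_m,L_n].
\end{equation}
By No-Go Theorem \ref{thm:nogo3}, $\boldsymbol{L}$ generates an $\SO(3)$ symmetry, so $[L_m,L_n]=i\epsilon_{mnp}L_p$. The right-hand side above therefore vanishes, so $[L_m,S_n]=0$ for all $m,n$. (Equivalently, one could repeat the calculation with $\boldsymbol{S}$ playing the role of $\boldsymbol{L}$, or deduce the symmetric statement from the Jacobi identity.) This is the main computation of the lemma; no obstacle here beyond bookkeeping.

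Next, I would pass from commuting infinitesimal generators to commuting group actions. The operators $L_m$ and $S_n$ are realized as tangent vector fields on the total space $E$, via Eq. (\ref{eq:eta_g_bundle}), and the Jacobi--Lie bracket of these vector fields coincides with the commutator above. Since $[\eta^L_m,\eta^S_n]=0$ as vector fields and $\SO(3)$ is connected, the Baker--Campbell--Hausdorff formula (or equivalently the standard fact that commuting complete vector fields have commuting flows) implies that the one-parameter subgroups generated by each $L_m$ and each $S_n$ commute, and hence the full $\SO(3)$-actions $\Sigma^L$ and $\Sigma^S$ on $E$ commute: $\Sigma^L_R \Sigma^S_{R'} = \Sigma^S_{R'}\Sigma^L_R$ for all $R,R' \in \SO(3)$.

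Finally, projecting this identity down through $\pi$ gives the commutativity of the base manifold actions. Explicitly, for any $k \in \Lightcone$ and any $(k,v)\in E(k)$, applying $\pi$ to $\Sigma^L_R \Sigma^S_{R'}(k,v) = \Sigma^S_{R'}\Sigma^L_R(k,v)$ yields $\sigma^L_R \sigma^S_{R'} k = \sigma^S_{R'} \sigma^L_R k$, so $\sigma^L$ and $\sigma^S$ commute as actions on $\Lightcone$. The main conceptual step is the first computation; the remaining work is a routine transfer of an infinitesimal commutativity statement to the finite group and then to the base, which I expect to require no real obstacle given connectedness of $\SO(3)$.
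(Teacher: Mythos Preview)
Your proof is correct and follows essentially the same route as the paper: both expand $[L_m,J_n]$ using $J_n=L_n+S_n$, use the vector operator identity $[L_m,J_n]=i\epsilon_{mnp}L_p$ together with the $\so(3)$ relation $[L_m,L_n]=i\epsilon_{mnp}L_p$ (which the paper invokes via Eq.~(\ref{eq:L_vect_op}) and you justify by No-Go Theorem~\ref{thm:nogo3}), and conclude $[L_m,S_n]=0$. Your passage from commuting generators to commuting group actions and then to commuting base actions is simply a more explicit version of what the paper compresses into a single sentence.
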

\begin{proof}
    From equations (\ref{eq:ng3_splitting}) and (\ref{eq:L_vect_op}) we have that
    \begin{align}
        [L_m,J_n] &= i\epsilon_{mnp} L_p + [L_m, S_n] \\
                  &= i\epsilon_{mnp} L_p.
    \end{align}
    Thus 
    \begin{equation}
        [L_m, S_n] = 0,
    \end{equation}
    and since the generators commute, so do the generated actions $\Sigma^L$ and $\Sigma^S$. This in turn implies that $\sigma^L$ and $\sigma^S$ commute. 
\end{proof}

Let $\diff(M)$ be the group of diffeomorphisms on a manifold $M$. A $G$-action 
\begin{gather}
    \zeta:G\rightarrow \diff(M) \\
    g \mapsto \zeta_g
\end{gather}
on a manifold $M$ is a continuous group homomorphism from the Lie group $G$ to $\diff(M)$. $\zeta$ is transitive if for any $m_1, m_2 \in M$, there is a $g \in G$ such that $\zeta_{g}m_1 = m_2$. Under the energy conserving assumption (\ref{eq:energy_perserving}), $\sigma$, $\sigma^L$, and $\sigma^S$ are all $\SO(3)$-actions on the unit sphere $S^2 \seq \Lightcone$. 

\begin{lemma}\label{lm:transative_sphere_action}
    Suppose that $\zeta:\SO(3) \rightarrow \diff(S^2)$ is an $\SO(3)$-action on $S^2$. Then
    \begin{enumerate}[(i)]
        \item Either $\zeta$ is the identity operator or else it is transitive.
        \item For every $R \in \SO(3)$, $\zeta_R$ has a fixed point, that is, there exists an $x \in S^2$ such that $\zeta_R x = x$.
    \end{enumerate}
\end{lemma}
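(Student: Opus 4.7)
The plan is to exploit the orbit structure of the $\SO(3)$-action for part (i) and invoke the Lefschetz fixed point theorem for part (ii).

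For (i), I would analyze the orbit through an arbitrary $x \in S^2$. The stabilizer $H_x = \{R \in \SO(3): \zeta_R x = x\}$ is a closed (hence Lie) subgroup of $\SO(3)$, and the orbit is diffeomorphic to $\SO(3)/H_x$, of dimension $3 - \dim H_x$. Since the orbit embeds into the 2-manifold $S^2$, we must have $\dim H_x \geq 1$. The closed subgroups of $\SO(3)$ of positive dimension are, up to conjugation, $\SO(2)$, $\mathrm{O}(2)$, and $\SO(3)$ itself. If $H_x = \SO(3)$ then $x$ is fixed by every $\zeta_R$; otherwise the orbit is a 2-dimensional submanifold of $S^2$, and because it is compact (as the continuous image of the compact group $\SO(3)$) and of full dimension, it is clopen in $S^2$ and therefore equals all of $S^2$ by connectedness. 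Hence either every $x$ is fixed, in which case $\zeta$ is the identity, or some orbit is all of $S^2$, in which case $\zeta$ is transitive.

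For (ii), since $\SO(3)$ is path-connected, any $R \in \SO(3)$ can be joined to the identity by a continuous path $R_t$, $t \in [0,1]$, and $(x,t) \mapsto \zeta_{R_t}(x)$ is then a homotopy from $\mathrm{id}_{S^2}$ to $\zeta_R$. By homotopy invariance of the Lefschetz number, $L(\zeta_R) = L(\mathrm{id}_{S^2}) = \chi(S^2) = 2 \neq 0$. Since $S^2$ is a compact manifold, the Lefschetz fixed point theorem guarantees that $\zeta_R$ has a fixed point.

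The main obstacle is the orbit classification in (i), specifically justifying that any 2-dimensional orbit must exhaust $S^2$; this rests on combining the compactness of $\SO(3)$ (so orbits are closed in $S^2$) with the fact that a submanifold of a manifold of the same dimension is automatically open. As a byproduct, since $\SO(3)/\mathrm{O}(2) \cong \Real P^2 \not\cong S^2$, the stabilizer in the transitive case must actually be $\SO(2)$, so $\zeta$ is equivalent to the standard rotation action on $S^2$; however, this refinement is not needed for the lemma itself.
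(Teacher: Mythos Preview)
Your proof is correct and follows essentially the same strategy as the paper's. For part (i), both arguments classify the stabilizer as a closed subgroup of $\SO(3)$ and use the orbit $\cong \SO(3)/H_x$; the paper eliminates the $\mathrm{O}(2)$ case separately by noting that $\mathbb{RP}^2$ cannot embed in $S^2$ and then argues that an orbit homeomorphic to $S^2$ must be all of $S^2$, whereas your clopen argument handles the one-dimensional stabilizers uniformly and is slightly slicker. For part (ii), the paper uses degree theory (a fixed-point-free self-map of $S^2$ has degree $-1$, while $\zeta_R$ has degree $1$ by homotopy invariance), and you use the Lefschetz number instead; these are equivalent here since $L(f) = 1 + \deg(f)$ for self-maps of $S^2$.
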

\begin{proof}
    (i) \cite{Khan_MSE} For $x \in S^2$, let $G_x$ be the isotropy group of $x$ with respect to $\zeta$. $G_x$ is a closed subgroup of $\SO(3)$. The only closed subgroups of $\SO(3)$ are isomorphic to either a finite group, $\SO(2)$, $\mathrm{O}(2)$, or $\SO(3)$ \cite{Antoneli2012}. 
    Consider the map $F:\SO(3)/G_x \rightarrow S^2$ given by $F(R) = \zeta_R x$. This gives a homeomorphism of $\SO(3)/G_x$ with the orbit of $x$ in $S^2$ \cite{Frankel2011}. If $G_x$ is a finite group, then $\SO(3)/G_x$ would be three-dimensional but homeomorphic to a subset of $S^2$, which is an obvious contradiction. Similarly, $\SO(3)/\mathrm{O}(2) \cong \mathbb{RP}^2$, but the latter cannot be embedded in $S^2$ \cite{Tu_manifolds}, so $G_x \not \cong \mathrm{O}(2)$. If $G_x \cong \SO(2)$, the orbit of $x$ is homeomorphic to $\SO(3)/\SO(2) \cong S^2$ \cite{Frankel2011}. There are no proper subspaces of $S^2$ homeomorphic to $S^2$, thus the orbit of $x$ must be all of $S^2$. Thus, if $G_x \cong \SO(2)$ for some $x$, then $\zeta$ is transitive. 
    
    Lastly, if $G_x = \SO(3)$, then $\zeta_R x = x$ for all $R \in \SO(3)$. Furthermore, this implies that $\zeta_R y = y$ for all $R \in \SO(3)$ and $y \in S^2$. For suppose there were a $y$ for which this were not true. Then $G_y \cong \SO(2)$ and the action is transitive and thus $x$ would be in the orbit of $y$, which is a contradiction. Thus, if there exists any $x$ for which $G_x = \SO(3)$, then $\zeta$ is just the identity action on $S^2$.

    (ii) Let $I$ be the identity in $\SO(3)$, and note that $\zeta_I =\mathds{1}$ is the identity operator on $S^2$. Since $\SO(3)$ is path-connected, $\zeta_R:S^2 \rightarrow S^2$ is homotopic to the identity for any $R \in \SO(3)$. The degree of a map $f:S^2 \rightarrow S^2$ is homotopy invariant \cite{HatcherAT}, so $\text{deg}(\zeta_R) = \text{deg}(\mathds{1}) = 1$. If a map $f:S^2\rightarrow S^2$ has no fixed points, then $\text{deg}(f) = -1$ \cite{HatcherAT}. Thus $\zeta_R$ must have a fixed point.
\end{proof}

\begin{noGoTheorem}\label{thm:nogo4}
    Assume that the $\SO(3)$ action $\Sigma$ of a massless particle splits into UIRs of $\SO(3)$, $\Sigma^L$ and $\Sigma^S$, which are generated by the vector operators $\boldsymbol{L}$ and $\boldsymbol{S}$ as in Eq. (\ref{eq:ng3_splitting}). Assume further that these actions preserve energy in the sense of Eq. (\ref{eq:energy_perserving}). Then either $\Sigma^L$ or $\Sigma^S$ acts trivially. Thus, either $\boldsymbol{L}=0$ or $\boldsymbol{S}=0$, and consequently the splitting is trivial.
\end{noGoTheorem}
\begin{proof}
    If either of $\Sigma^L$ or $\Sigma^S$ is an internal action then the result follows from No-Go Theorem \ref{thm:nogo1}, so consider the case when neither $\Sigma^L$ or $\Sigma^S$ are internal. Let $E_{S^2_r}$ be the bundle obtained by restricting the base manifold of $E_0$ to $S_r^2$ where $S^2_r \seq \pspace \cong \Lightcone$ is the sphere of radius $r>0$. By the energy preserving condition, $\Sigma^L$ and $\Sigma^S$ restrict to $\SO(3)$-actions on $E_{S^2_r}$. Consider the corresponding actions $\sigma^{L,r}$ and $\sigma^{S,r}$ on the base manifold $S^2_r \cong S^2$. By Lemma 2 (i), for each $r$, $\sigma^{L,r}$ is either transitive or the identity. Since $\Sigma^L$ is not internal, there must be some $r_0$ such that $\sigma^{L,r_0}$ is transitive. Furthermore, because the action $\sigma^L$ is continuous, by varying $r_0$ to any other $r>0$, we obtain that $\sigma^{L,r}$ is also transitive. The same argument shows that $\sigma^{S,r}$ is transitive for every $r$.
    
    Fix some $r>0$. By Lemma \ref{lm:commutative},
    \begin{equation}\label{eq:commute}
        \sigma^{S,r}_{R_1}\sigma^{L,r}_{R_2} = \sigma^{L,r}_{R_2}\sigma^{S,r}_{R_1}
    \end{equation}
    for every $R_1,R_2 \in \SO(3)$. Consider an arbitrary but fixed $R_1$. By Lemma \ref{lm:transative_sphere_action} (ii), $\sigma^{S,r}_{R_1}$ has some fixed point $y \in S^2_{r}$. By (\ref{eq:commute}),
    \begin{equation}
        \sigma^{S,r}_{R_1}(\sigma^{L,r}_{R_2}y) = \sigma^{L,r}_{R_2}y,
    \end{equation}
    showing that $\sigma^{L,r}_{R_2}y$ is a fixed point of $\sigma^{S,r}_{R_1}$ for every $R_2$. Since $\sigma^{L,r}$ is transitive, $\sigma^{L,r}_{R_2}y$ spans all of $S^2_{r}$ as $R_2$ is varied over $\SO(3)$. Therefore, every point is a fixed point of $\sigma^{S,r}_{R_1}$, and thus $\sigma^{S,r}_{R_1}$ is the identity. Since this argument can be applied to every $R_1$, it follows that $\sigma^{S,r}$ is the trivial action. However, $\sigma^{S,r}$ was nontrivial by assumption, so a contradiction has been reached, showing that either $\Sigma^L$ or $\Sigma^S$ must be trivial.
\end{proof}

\section{Non-physical SAM operators arising from gauge redundancy}\label{sec:gauge_redundancy}
In this section we give an example of how gauge redundancies in the description of massless particles can lead to the identification of non-physical SAM operators for massless particles. For concreteness, we consider the case of photons. It is common to describe photons in momentum space using the Lorentz gauge vector potential $A^\mu(k) = (A^0(k), \boldsymbol{A}(k))$, which satisfies the Lorentz gauge condition
\begin{equation}\label{eq:Lorentz_gauge}
    A^\mu k_\mu = 0.
\end{equation} 
Note that while the Lorentz gauge has the advantage of being Lorentz covariant, it is not complete, i.e., there is still residual redundancy in this description \cite{Tong2006}. Condition (\ref{eq:Lorentz_gauge}) determines $A^0$ in terms of $k$ and $\boldsymbol{A}$, so the potential can be parameterized just by $\boldsymbol{A}(k)$. One can then form the photon bundle $\pi_\gamma: \gamma \rightarrow \Lightcone$ as the collection of all $(k,\boldsymbol{A})$ where $k \in \Lightcone$ and $\boldsymbol{A} \in \Comp^3$. Notice that this bundle is topologically trivial $\gamma \cong \Lightcone \times \Comp^3$ since the fiber does not depend on $k$. As such, it is similar to a mass $m>0$ particle with spin 1, which is described by the vector bundle $M_m \times \Comp^3$. In particular, one can define an SAM operator $\boldsymbol{S}^\gamma$ for $\gamma$ which acts on the fibers by the spin 1 matrices
\begin{equation} \label{eq:spin_1_matrices}
    {(S^{\gamma}_{n})}_{pq} = -i \epsilon_{npq}.
\end{equation}
By defining $\boldsymbol{L}^\gamma = \boldsymbol{J} - \boldsymbol{S}^\gamma$, which generates rotations in $M_m$ without affecting $\Comp^3$, it might then seem that we have obtained an SAM-OAM decomposition of the photon angular momentum. Indeed, this was one of the early attempts at such a splitting \cite{Akhiezer1965, VanEnk1994}. However, this splitting is non-physical and is only possible due to residual redundancy in the Lorentz gauge. To see this, let $\Sigma^{\gamma,S}$ denote the $\SO(3)$ symmetry on $\gamma$ generated by $\boldsymbol{S}^\gamma$. $\gamma$ is a rank-3 bundle, but there are only two photon polarizations. A typical way of removing the remaining gauge freedom is to impose the Coulomb gauge condition $\boldsymbol{k} \cdot \boldsymbol{A} = 0$. Applying this condition in each fiber gives the rank-2 Coulomb gauge photon bundle in which all vectors represent physically different states \cite{PalmerducaQin_GT}. More generally, the remaining gauge freedom can be removed without choosing a specific gauge via the procedure of Asorey et al. \cite{Asorey1985}. At each fixed $k \in \Lightcone$, $\boldsymbol{A}$ and $\boldsymbol{A} + \alpha \boldsymbol{k}$ are physically equivalent for any $\alpha \in \Comp$, and thus we can use this to define an equivalence relation $\sim$ on each fiber. We then remove all gauge redundancy by forming the rank-2 vector bundle $\gamma/\sim$ by taking the quotient in each fiber. Regardless of how the gauge redundancy is removed, one obtains a rank-2 vector bundle $\tilde{\gamma}$ of physical states. If $\boldsymbol{S}$ is a true SAM operator, it must descend to an operator on $\tilde{\gamma}$, generating an $\SO(3)$ action $\Sigma^{\tilde{\gamma},S}$ on $\tilde{\gamma}$. This is equivalent to saying that if $(k, \boldsymbol{A})$ and $(k,\boldsymbol{A}')$ are physically equivalent, then $\Sigma^{\gamma,S}_R(k,\boldsymbol{A})$ and $\Sigma^{\gamma,S}_R (k,\boldsymbol{A}')$ must be physically equivalent for any $R \in \SO(3)$. However, this is not possible for the same dimensional reasons that lead to No-Go Theorem \ref{thm:nogo1}. In particular, $\Sigma^{\tilde{\gamma},S}$ would be an $\SO(3)$ representation on the two-dimensional fibers of $\tilde{\gamma}$, but the only such representation is the trivial representation. Thus $\boldsymbol{S}^\gamma$ would also have to be trivial, contradicting Equation (\ref{eq:spin_1_matrices}). Therefore $\boldsymbol{S}^\gamma$ cannot descend to a well-defined operator on the space of physical photon states; it is an artifact of gauge redundancy in the Lorentz gauge. Similarly, massless particles characterized by helicity $\pm h$ with $h>0$ can together be described by a spin $h$ representation \cite{Asorey1985}. However, the spin $h$ representation has dimension $2h + 1$, while the the $\pm h$ massless particles only have two independent internal degrees of freedom, and thus there is a dimension $2h-1$ gauge redundancy in this description. While this gauge redundant description might appear to give rise to an SAM-OAM decomposition, this decomposition cannot be gauge independent and is thus not physical.

\section{Single-particle states vs. Fock space}\label{sec:Fock_space}
We note that we have obtained the No-Go Theorems by examining the symmetries of the single-particle states rather than the multiparticle Fock space. Under very mild physical assumptions, these treatments are equivalent for the SAM-OAM decomposition problem. Let $\mathcal{H} = L^2(E)$ be the single-particle Hilbert space. The boson Fock space $\mathcal{F}$ is constructed from symmetrized sums of copies of $\mathcal{H}$ as \cite{Folland2008}
\begin{equation}
  \mathcal{F} = \bigoplus_{n=0}^{\infty}\text{Sym}^n(\mathcal{H}).
\end{equation}
Thus, any symmetry of the single-particle space $\mathcal{H}$ induces a corresponding symmetry of $\mathcal{F}$ given by applying the symmetry to each copy of $\mathcal{H}$. A priori, there is the possibility that there exist $\SO(3)$ symmetries of the Fock space that do not arise in this manner, that is, actions of $\SO(3)$ which mix states of different particle number. Our analysis has not accounted for such symmetries. However, it is reasonable to make the assumption that such exotic symmetries, if they exist, do not describe SAM or OAM operators. Indeed, the total angular momentum operator does not mix states of different particle number, nor do any of the previously proposed SAM or OAM operators for massless particles. For example, see Yang et al.'s \cite{Yang2022} review of the various proposed SAM and OAM operators, and note that in every term creation and annihilation operators appear in equal numbers so that particle number is conserved. Indeed, it seems to be implicitly assumed in discussions of the SAM or OAM of elementary particles that these properties are defined at the single-particle level and are not emergent multiparticle phenomena. We also make this assumption.

\section{Conclusion}
The fundamental results of this paper are No-Go Theorems \ref{thm:nogo1} and \ref{thm:nogo2}, showing that the total angular momentum operator of massless bosons cannot be split into SAM and OAM operators. This result may be surprising given that the angular momentum of massive particles splits so readily into SAM and OAM. On the other hand, this result could have been anticipated. Despite the relevancy of the OAM and SAM of light to optics and photonics and despite sustained interest in this problem, there is no agreement in the literature as to the correct definition of such OAM and SAM operators. This suggests the existence of a fundamental obstruction to such a splitting, one which we showed traces back to the singular limit that occurs as the mass of a particle goes to zero. This limit entails a topological singularity, as the momentum space jumps abruptly from a contractible mass hyperboloid to the non-contractible lightcone and the topologically trivial rank-$(2s+1)$ massive vector bundles are replaced by nontrivial massless line bundles. This topological singularity is accompanied by a geometric singularity as the little group jumps from $\SO(3)$ for massive particles to $\SO(2)$ for massless particles. Particles carry a canonical internal little group symmetry. As SAM is associated with an internal $\SO(3)$ symmetry rather than an $\SO(2)$ symmetry, this singularity can be viewed as the essential obstruction to an SAM-OAM decomposition.

Furthermore, we have shown that this obstruction is not easily circumvented by generalizing the definition of angular momentum operators. Perhaps the simplest generalization is to allow spin to describe non-internal symmetries, however, No-Go Theorem \ref{thm:nogo4} shows that this does not yield nontrivial operators. Another option is to attempt to define SAM and OAM operators which generate non-rotational symmetries; while this is less natural, it has been attempted. Indeed, as Yang et al. \cite{Yang2022} showed, most of the proposed SAM and OAM operators do not actually generate 3D rotations. No-Go Theorem \ref{thm:nogo3} shows that in these cases it is not possible for both the SAM and OAM to generate \emph{any} symmetries, even non-$\SO(3)$ symmetries.

These no-go theorems raise the following question. Although photon angular momentum does not split into SAM and OAM, why is it that OAM beams have found such remarkable success in experimental optics? The answer to this question certainly requires additional theoretical work and examination of specific experimental results. However, we suspect that an important point in addressing this question is that while photon angular momentum cannot be split into SAM and OAM, photons can still carry an arbitrarily large amount of angular momentum. We thus hypothesize that many of the useful features of so-called OAM beams result because they carry large amounts of angular momentum, not necessarily because that angular momentum is orbital.

We also note that as we have considered only non-projective representations, this analysis does not directly apply to Weyl fermions. Nevertheless, we suspect that the no-go theorems can be extended to projective representations, and thus to Weyl fermions; future research will explore the possibility of such extensions.

\begin{acknowledgments}
We thank Moishe Kohan for his help in proving Lemma \ref{lm:transative_sphere_action} (i). This work is supported by U.S. Department of Energy (DE-AC02-09CH11466).
\end{acknowledgments}

\bibliography{SAM_OAM}

\begin{thebibliography}{51}%
\makeatletter
\providecommand \@ifxundefined [1]{%
 \@ifx{#1\undefined}
}%
\providecommand \@ifnum [1]{%
 \ifnum #1\expandafter \@firstoftwo
 \else \expandafter \@secondoftwo
 \fi
}%
\providecommand \@ifx [1]{%
 \ifx #1\expandafter \@firstoftwo
 \else \expandafter \@secondoftwo
 \fi
}%
\providecommand \natexlab [1]{#1}%
\providecommand \enquote  [1]{``#1''}%
\providecommand \bibnamefont  [1]{#1}%
\providecommand \bibfnamefont [1]{#1}%
\providecommand \citenamefont [1]{#1}%
\providecommand \href@noop [0]{\@secondoftwo}%
\providecommand \href [0]{\begingroup \@sanitize@url \@href}%
\providecommand \@href[1]{\@@startlink{#1}\@@href}%
\providecommand \@@href[1]{\endgroup#1\@@endlink}%
\providecommand \@sanitize@url [0]{\catcode `\\12\catcode `\$12\catcode `\&12\catcode `\#12\catcode `\^12\catcode `\_12\catcode `\%12\relax}%
\providecommand \@@startlink[1]{}%
\providecommand \@@endlink[0]{}%
\providecommand \url  [0]{\begingroup\@sanitize@url \@url }%
\providecommand \@url [1]{\endgroup\@href {#1}{\urlprefix }}%
\providecommand \urlprefix  [0]{URL }%
\providecommand \Eprint [0]{\href }%
\providecommand \doibase [0]{https://doi.org/}%
\providecommand \selectlanguage [0]{\@gobble}%
\providecommand \bibinfo  [0]{\@secondoftwo}%
\providecommand \bibfield  [0]{\@secondoftwo}%
\providecommand \translation [1]{[#1]}%
\providecommand \BibitemOpen [0]{}%
\providecommand \bibitemStop [0]{}%
\providecommand \bibitemNoStop [0]{.\EOS\space}%
\providecommand \EOS [0]{\spacefactor3000\relax}%
\providecommand \BibitemShut  [1]{\csname bibitem#1\endcsname}%
\let\auto@bib@innerbib\@empty
\bibitem [{\citenamefont {Shen}\ \emph {et~al.}(2019)\citenamefont {Shen}, \citenamefont {Wang}, \citenamefont {Xie}, \citenamefont {Min}, \citenamefont {Fu}, \citenamefont {Liu}, \citenamefont {Gong},\ and\ \citenamefont {Yuan}}]{Shen2019}%
  \BibitemOpen
  \bibfield  {author} {\bibinfo {author} {\bibfnamefont {Y.}~\bibnamefont {Shen}}, \bibinfo {author} {\bibfnamefont {X.}~\bibnamefont {Wang}}, \bibinfo {author} {\bibfnamefont {Z.}~\bibnamefont {Xie}}, \bibinfo {author} {\bibfnamefont {C.}~\bibnamefont {Min}}, \bibinfo {author} {\bibfnamefont {X.}~\bibnamefont {Fu}}, \bibinfo {author} {\bibfnamefont {Q.}~\bibnamefont {Liu}}, \bibinfo {author} {\bibfnamefont {M.}~\bibnamefont {Gong}},\ and\ \bibinfo {author} {\bibfnamefont {X.}~\bibnamefont {Yuan}},\ }\bibfield  {title} {\bibinfo {title} {Optical vortices 30 years on: {OAM} manipulation from topological charge to multiple singularities},\ }\href {https://doi.org/10.1038/s41377-019-0194-2} {\bibfield  {journal} {\bibinfo  {journal} {Light: Science {\&} Applications}\ }\textbf {\bibinfo {volume} {8}},\ \bibinfo {pages} {90} (\bibinfo {year} {2019})}\BibitemShut {NoStop}%
\bibitem [{\citenamefont {F\"{u}rhapter}\ \emph {et~al.}(2005)\citenamefont {F\"{u}rhapter}, \citenamefont {Jesacher}, \citenamefont {Bernet},\ and\ \citenamefont {Ritsch-Marte}}]{Furhapter2005}%
  \BibitemOpen
  \bibfield  {author} {\bibinfo {author} {\bibfnamefont {S.}~\bibnamefont {F\"{u}rhapter}}, \bibinfo {author} {\bibfnamefont {A.}~\bibnamefont {Jesacher}}, \bibinfo {author} {\bibfnamefont {S.}~\bibnamefont {Bernet}},\ and\ \bibinfo {author} {\bibfnamefont {M.}~\bibnamefont {Ritsch-Marte}},\ }\bibfield  {title} {\bibinfo {title} {Spiral phase contrast imaging in microscopy},\ }\href {https://doi.org/10.1364/OPEX.13.000689} {\bibfield  {journal} {\bibinfo  {journal} {Opt. Express}\ }\textbf {\bibinfo {volume} {13}},\ \bibinfo {pages} {689} (\bibinfo {year} {2005})}\BibitemShut {NoStop}%
\bibitem [{\citenamefont {Tamburini}\ \emph {et~al.}(2006)\citenamefont {Tamburini}, \citenamefont {Anzolin}, \citenamefont {Umbriaco}, \citenamefont {Bianchini},\ and\ \citenamefont {Barbieri}}]{Tamburini2006}%
  \BibitemOpen
  \bibfield  {author} {\bibinfo {author} {\bibfnamefont {F.}~\bibnamefont {Tamburini}}, \bibinfo {author} {\bibfnamefont {G.}~\bibnamefont {Anzolin}}, \bibinfo {author} {\bibfnamefont {G.}~\bibnamefont {Umbriaco}}, \bibinfo {author} {\bibfnamefont {A.}~\bibnamefont {Bianchini}},\ and\ \bibinfo {author} {\bibfnamefont {C.}~\bibnamefont {Barbieri}},\ }\bibfield  {title} {\bibinfo {title} {Overcoming the rayleigh criterion limit with optical vortices},\ }\href {https://doi.org/10.1103/PhysRevLett.97.163903} {\bibfield  {journal} {\bibinfo  {journal} {Phys. Rev. Lett.}\ }\textbf {\bibinfo {volume} {97}},\ \bibinfo {pages} {163903} (\bibinfo {year} {2006})}\BibitemShut {NoStop}%
\bibitem [{\citenamefont {Barreiro}\ \emph {et~al.}(2008)\citenamefont {Barreiro}, \citenamefont {Wei},\ and\ \citenamefont {Kwiat}}]{Barreiro2008}%
  \BibitemOpen
  \bibfield  {author} {\bibinfo {author} {\bibfnamefont {J.~T.}\ \bibnamefont {Barreiro}}, \bibinfo {author} {\bibfnamefont {T.-C.}\ \bibnamefont {Wei}},\ and\ \bibinfo {author} {\bibfnamefont {P.~G.}\ \bibnamefont {Kwiat}},\ }\bibfield  {title} {\bibinfo {title} {{Beating the channel capacity limit for linear photonic superdense coding}},\ }\href {https://doi.org/https://doi.org/10.1038/nphys919} {\bibfield  {journal} {\bibinfo  {journal} {Nature Physics}\ }\textbf {\bibinfo {volume} {4}},\ \bibinfo {pages} {282} (\bibinfo {year} {2008})}\BibitemShut {NoStop}%
\bibitem [{\citenamefont {Grier}(2003)}]{Grier2003}%
  \BibitemOpen
  \bibfield  {author} {\bibinfo {author} {\bibfnamefont {D.~G.}\ \bibnamefont {Grier}},\ }\bibfield  {title} {\bibinfo {title} {{A revolution in optical manipulation}},\ }\href {https://doi.org/10.1038/nature01935} {\bibfield  {journal} {\bibinfo  {journal} {Nature}\ }\textbf {\bibinfo {volume} {424}},\ \bibinfo {pages} {810} (\bibinfo {year} {2003})}\BibitemShut {NoStop}%
\bibitem [{\citenamefont {{B. P. Abbott et al.}}(2016)}]{GravWave2016Blackhole}%
  \BibitemOpen
  \bibfield  {author} {\bibinfo {author} {\bibnamefont {{B. P. Abbott et al.}}} (\bibinfo {collaboration} {LIGO Scientific Collaboration and Virgo Collaboration}),\ }\bibfield  {title} {\bibinfo {title} {Observation of gravitational waves from a binary black hole merger},\ }\href {https://doi.org/10.1103/PhysRevLett.116.061102} {\bibfield  {journal} {\bibinfo  {journal} {Phys. Rev. Lett.}\ }\textbf {\bibinfo {volume} {116}},\ \bibinfo {pages} {061102} (\bibinfo {year} {2016})}\BibitemShut {NoStop}%
\bibitem [{\citenamefont {{B. P. Abbott et al.}}(2017)}]{GravWave2017NeutronStar}%
  \BibitemOpen
  \bibfield  {author} {\bibinfo {author} {\bibnamefont {{B. P. Abbott et al.}}} (\bibinfo {collaboration} {LIGO Scientific Collaboration and Virgo Collaboration}),\ }\bibfield  {title} {\bibinfo {title} {{GW170817}: Observation of gravitational waves from a binary neutron star inspiral},\ }\href {https://doi.org/10.1103/PhysRevLett.119.161101} {\bibfield  {journal} {\bibinfo  {journal} {Phys. Rev. Lett.}\ }\textbf {\bibinfo {volume} {119}},\ \bibinfo {pages} {161101} (\bibinfo {year} {2017})}\BibitemShut {NoStop}%
\bibitem [{\citenamefont {Bialynicki-Birula}\ and\ \citenamefont {Bialynicka-Birula}(2016)}]{Bialynicki-Birula_2016}%
  \BibitemOpen
  \bibfield  {author} {\bibinfo {author} {\bibfnamefont {I.}~\bibnamefont {Bialynicki-Birula}}\ and\ \bibinfo {author} {\bibfnamefont {Z.}~\bibnamefont {Bialynicka-Birula}},\ }\bibfield  {title} {\bibinfo {title} {Gravitational waves carrying orbital angular momentum},\ }\href {https://doi.org/10.1088/1367-2630/18/2/023022} {\bibfield  {journal} {\bibinfo  {journal} {New Journal of Physics}\ }\textbf {\bibinfo {volume} {18}},\ \bibinfo {pages} {023022} (\bibinfo {year} {2016})}\BibitemShut {NoStop}%
\bibitem [{\citenamefont {Baral}\ \emph {et~al.}(2020)\citenamefont {Baral}, \citenamefont {Ray}, \citenamefont {Koley},\ and\ \citenamefont {Majumdar}}]{Baral2020}%
  \BibitemOpen
  \bibfield  {author} {\bibinfo {author} {\bibfnamefont {P.}~\bibnamefont {Baral}}, \bibinfo {author} {\bibfnamefont {A.}~\bibnamefont {Ray}}, \bibinfo {author} {\bibfnamefont {R.}~\bibnamefont {Koley}},\ and\ \bibinfo {author} {\bibfnamefont {P.}~\bibnamefont {Majumdar}},\ }\bibfield  {title} {\bibinfo {title} {Gravitational waves with orbital angular momentum},\ }\href {https://doi.org/10.1140/epjc/s10052-020-7881-2} {\bibfield  {journal} {\bibinfo  {journal} {The European Physical Journal C}\ }\textbf {\bibinfo {volume} {80}},\ \bibinfo {pages} {326} (\bibinfo {year} {2020})}\BibitemShut {NoStop}%
\bibitem [{\citenamefont {Wu}\ and\ \citenamefont {Chen}(2023)}]{Wu2023}%
  \BibitemOpen
  \bibfield  {author} {\bibinfo {author} {\bibfnamefont {H.}~\bibnamefont {Wu}}\ and\ \bibinfo {author} {\bibfnamefont {L.}~\bibnamefont {Chen}},\ }\bibfield  {title} {\bibinfo {title} {Twisting and entangling gravitons in high-dimensional orbital angular momentum states via photon-graviton conversion},\ }\href {https://doi.org/10.1103/PhysRevD.107.125027} {\bibfield  {journal} {\bibinfo  {journal} {Phys. Rev. D}\ }\textbf {\bibinfo {volume} {107}},\ \bibinfo {pages} {125027} (\bibinfo {year} {2023})}\BibitemShut {NoStop}%
\bibitem [{\citenamefont {van Enk}\ and\ \citenamefont {Nienhuis}(1994)}]{VanEnk1994}%
  \BibitemOpen
  \bibfield  {author} {\bibinfo {author} {\bibfnamefont {S.~J.}\ \bibnamefont {van Enk}}\ and\ \bibinfo {author} {\bibfnamefont {G.}~\bibnamefont {Nienhuis}},\ }\bibfield  {title} {\bibinfo {title} {{Spin and orbital angular momentum of photons}},\ }\href {https://doi.org/10.1209/0295-5075/25/7/004} {\bibfield  {journal} {\bibinfo  {journal} {Europhysics Letters}\ }\textbf {\bibinfo {volume} {25}},\ \bibinfo {pages} {497} (\bibinfo {year} {1994})}\BibitemShut {NoStop}%
\bibitem [{\citenamefont {Fernandez-Corbaton}\ \emph {et~al.}(2014)\citenamefont {Fernandez-Corbaton}, \citenamefont {Zambrana-Puyalto},\ and\ \citenamefont {Molina-Terriza}}]{FernandezCorbaton2014}%
  \BibitemOpen
  \bibfield  {author} {\bibinfo {author} {\bibfnamefont {I.}~\bibnamefont {Fernandez-Corbaton}}, \bibinfo {author} {\bibfnamefont {X.}~\bibnamefont {Zambrana-Puyalto}},\ and\ \bibinfo {author} {\bibfnamefont {G.}~\bibnamefont {Molina-Terriza}},\ }\bibfield  {title} {\bibinfo {title} {On the transformations generated by the electromagnetic spin and orbital angular momentum operators},\ }\href {https://doi.org/10.1364/JOSAB.31.002136} {\bibfield  {journal} {\bibinfo  {journal} {J. Opt. Soc. Am. B}\ }\textbf {\bibinfo {volume} {31}},\ \bibinfo {pages} {2136} (\bibinfo {year} {2014})}\BibitemShut {NoStop}%
\bibitem [{\citenamefont {Palmerduca}\ and\ \citenamefont {Qin}(2024{\natexlab{a}})}]{PalmerducaQin_PT}%
  \BibitemOpen
  \bibfield  {author} {\bibinfo {author} {\bibfnamefont {E.}~\bibnamefont {Palmerduca}}\ and\ \bibinfo {author} {\bibfnamefont {H.}~\bibnamefont {Qin}},\ }\bibfield  {title} {\bibinfo {title} {Photon topology},\ }\href {https://doi.org/10.1103/PhysRevD.109.085005} {\bibfield  {journal} {\bibinfo  {journal} {Phys. Rev. D}\ }\textbf {\bibinfo {volume} {109}},\ \bibinfo {pages} {085005} (\bibinfo {year} {2024}{\natexlab{a}})}\BibitemShut {NoStop}%
\bibitem [{\citenamefont {Palmerduca}\ and\ \citenamefont {Qin}(2024{\natexlab{b}})}]{PalmerducaQin_GT}%
  \BibitemOpen
  \bibfield  {author} {\bibinfo {author} {\bibfnamefont {E.}~\bibnamefont {Palmerduca}}\ and\ \bibinfo {author} {\bibfnamefont {H.}~\bibnamefont {Qin}},\ }\bibfield  {title} {\bibinfo {title} {Graviton topology},\ }\href {https://doi.org/10.1007/JHEP11(2024)150} {\bibfield  {journal} {\bibinfo  {journal} {Journal of High Energy Physics}\ }\textbf {\bibinfo {volume} {2024}},\ \bibinfo {pages} {150} (\bibinfo {year} {2024}{\natexlab{b}})}\BibitemShut {NoStop}%
\bibitem [{\citenamefont {Leader}\ and\ \citenamefont {Lorc{\'{e}}}(2019)}]{Leader2019}%
  \BibitemOpen
  \bibfield  {author} {\bibinfo {author} {\bibfnamefont {E.}~\bibnamefont {Leader}}\ and\ \bibinfo {author} {\bibfnamefont {C.}~\bibnamefont {Lorc{\'{e}}}},\ }\bibfield  {title} {\bibinfo {title} {Corrigendum to “{The} angular momentum controversy: What's it all about and does it matter?”},\ }\href {https://doi.org/10.1016/j.physrep.2019.01.006} {\bibfield  {journal} {\bibinfo  {journal} {Physics Reports}\ }\textbf {\bibinfo {volume} {802}},\ \bibinfo {pages} {23} (\bibinfo {year} {2019})}\BibitemShut {NoStop}%
\bibitem [{\citenamefont {Shankar}(1994)}]{Shankar_QM}%
  \BibitemOpen
  \bibfield  {author} {\bibinfo {author} {\bibfnamefont {R.}~\bibnamefont {Shankar}},\ }\href {https://doi.org/https://doi.org/10.1007/978-1-4757-0576-8} {\emph {\bibinfo {title} {Principles of Quantum Mechanics}}},\ \bibinfo {edition} {2nd}\ ed.\ (\bibinfo  {publisher} {Springer},\ \bibinfo {address} {New York, NY, USA},\ \bibinfo {year} {1994})\BibitemShut {NoStop}%
\bibitem [{\citenamefont {Akhiezer}\ and\ \citenamefont {Berestetskii}(1965)}]{Akhiezer1965}%
  \BibitemOpen
  \bibfield  {author} {\bibinfo {author} {\bibfnamefont {A.}~\bibnamefont {Akhiezer}}\ and\ \bibinfo {author} {\bibfnamefont {V.}~\bibnamefont {Berestetskii}},\ }\href@noop {} {\emph {\bibinfo {title} {Quantum Electrodynamics}}},\ Interscience monographs and texts in physics and astronomy, v. 11\ (\bibinfo  {publisher} {Interscience Publishers},\ \bibinfo {address} {New York, NY, USA},\ \bibinfo {year} {1965})\BibitemShut {NoStop}%
\bibitem [{\citenamefont {Jaffe}\ and\ \citenamefont {Manohar}(1990)}]{Jaffe1990}%
  \BibitemOpen
  \bibfield  {author} {\bibinfo {author} {\bibfnamefont {R.}~\bibnamefont {Jaffe}}\ and\ \bibinfo {author} {\bibfnamefont {A.}~\bibnamefont {Manohar}},\ }\bibfield  {title} {\bibinfo {title} {The $g_1$ problem: Deep inelastic electron scattering and the spin of the proton},\ }\href {https://doi.org/https://doi.org/10.1016/0550-3213(90)90506-9} {\bibfield  {journal} {\bibinfo  {journal} {Nuclear Physics B}\ }\textbf {\bibinfo {volume} {337}},\ \bibinfo {pages} {509} (\bibinfo {year} {1990})}\BibitemShut {NoStop}%
\bibitem [{\citenamefont {Chen}\ \emph {et~al.}(2008)\citenamefont {Chen}, \citenamefont {L\"u}, \citenamefont {Sun}, \citenamefont {Wang},\ and\ \citenamefont {Goldman}}]{Chen2008}%
  \BibitemOpen
  \bibfield  {author} {\bibinfo {author} {\bibfnamefont {X.-S.}\ \bibnamefont {Chen}}, \bibinfo {author} {\bibfnamefont {X.-F.}\ \bibnamefont {L\"u}}, \bibinfo {author} {\bibfnamefont {W.-M.}\ \bibnamefont {Sun}}, \bibinfo {author} {\bibfnamefont {F.}~\bibnamefont {Wang}},\ and\ \bibinfo {author} {\bibfnamefont {T.}~\bibnamefont {Goldman}},\ }\bibfield  {title} {\bibinfo {title} {Spin and orbital angular momentum in gauge theories: Nucleon spin structure and multipole radiation revisited},\ }\href {https://doi.org/10.1103/PhysRevLett.100.232002} {\bibfield  {journal} {\bibinfo  {journal} {Phys. Rev. Lett.}\ }\textbf {\bibinfo {volume} {100}},\ \bibinfo {pages} {232002} (\bibinfo {year} {2008})}\BibitemShut {NoStop}%
\bibitem [{\citenamefont {Wakamatsu}(2010)}]{Wakamatsu2010}%
  \BibitemOpen
  \bibfield  {author} {\bibinfo {author} {\bibfnamefont {M.}~\bibnamefont {Wakamatsu}},\ }\bibfield  {title} {\bibinfo {title} {Gauge-invariant decomposition of nucleon spin},\ }\href {https://doi.org/10.1103/PhysRevD.81.114010} {\bibfield  {journal} {\bibinfo  {journal} {Phys. Rev. D}\ }\textbf {\bibinfo {volume} {81}},\ \bibinfo {pages} {114010} (\bibinfo {year} {2010})}\BibitemShut {NoStop}%
\bibitem [{\citenamefont {Bialynicki-Birula}\ and\ \citenamefont {Bialynicka-Birula}(2011)}]{Bialynicki-Birula2011}%
  \BibitemOpen
  \bibfield  {author} {\bibinfo {author} {\bibfnamefont {I.}~\bibnamefont {Bialynicki-Birula}}\ and\ \bibinfo {author} {\bibfnamefont {Z.}~\bibnamefont {Bialynicka-Birula}},\ }\bibfield  {title} {\bibinfo {title} {{Canonical separation of angular momentum of light into its orbital and spin parts}},\ }\href {https://doi.org/10.1088/2040-8978/13/6/064014} {\bibfield  {journal} {\bibinfo  {journal} {Journal of Optics}\ }\textbf {\bibinfo {volume} {13}},\ \bibinfo {pages} {064014} (\bibinfo {year} {2011})}\BibitemShut {NoStop}%
\bibitem [{\citenamefont {Leader}(2013)}]{Leader2013}%
  \BibitemOpen
  \bibfield  {author} {\bibinfo {author} {\bibfnamefont {E.}~\bibnamefont {Leader}},\ }\bibfield  {title} {\bibinfo {title} {{The angular momentum controversy: What's it all about and does it matter?}},\ }\href {https://doi.org/10.1134/S1063779613060142} {\bibfield  {journal} {\bibinfo  {journal} {Physics of Particles and Nuclei}\ }\textbf {\bibinfo {volume} {44}},\ \bibinfo {pages} {926} (\bibinfo {year} {2013})}\BibitemShut {NoStop}%
\bibitem [{\citenamefont {Leader}\ and\ \citenamefont {Lorcé}(2014)}]{Leader2014}%
  \BibitemOpen
  \bibfield  {author} {\bibinfo {author} {\bibfnamefont {E.}~\bibnamefont {Leader}}\ and\ \bibinfo {author} {\bibfnamefont {C.}~\bibnamefont {Lorcé}},\ }\bibfield  {title} {\bibinfo {title} {The angular momentum controversy: {What’s} it all about and does it matter?},\ }\href {https://doi.org/https://doi.org/10.1016/j.physrep.2014.02.010} {\bibfield  {journal} {\bibinfo  {journal} {Physics Reports}\ }\textbf {\bibinfo {volume} {541}},\ \bibinfo {pages} {163} (\bibinfo {year} {2014})}\BibitemShut {NoStop}%
\bibitem [{\citenamefont {Leader}(2016)}]{Leader2016}%
  \BibitemOpen
  \bibfield  {author} {\bibinfo {author} {\bibfnamefont {E.}~\bibnamefont {Leader}},\ }\bibfield  {title} {\bibinfo {title} {The photon angular momentum controversy: Resolution of a conflict between laser optics and particle physics},\ }\href {https://doi.org/https://doi.org/10.1016/j.physletb.2016.03.023} {\bibfield  {journal} {\bibinfo  {journal} {Physics Letters B}\ }\textbf {\bibinfo {volume} {756}},\ \bibinfo {pages} {303} (\bibinfo {year} {2016})}\BibitemShut {NoStop}%
\bibitem [{\citenamefont {Yang}\ \emph {et~al.}(2022)\citenamefont {Yang}, \citenamefont {Khosravi},\ and\ \citenamefont {Jacob}}]{Yang2022}%
  \BibitemOpen
  \bibfield  {author} {\bibinfo {author} {\bibfnamefont {L.-P.}\ \bibnamefont {Yang}}, \bibinfo {author} {\bibfnamefont {F.}~\bibnamefont {Khosravi}},\ and\ \bibinfo {author} {\bibfnamefont {Z.}~\bibnamefont {Jacob}},\ }\bibfield  {title} {\bibinfo {title} {Quantum field theory for spin operator of the photon},\ }\href {https://doi.org/10.1103/PhysRevResearch.4.023165} {\bibfield  {journal} {\bibinfo  {journal} {Phys. Rev. Res.}\ }\textbf {\bibinfo {volume} {4}},\ \bibinfo {pages} {023165} (\bibinfo {year} {2022})}\BibitemShut {NoStop}%
\bibitem [{\citenamefont {Wigner}(1939)}]{Wigner1939}%
  \BibitemOpen
  \bibfield  {author} {\bibinfo {author} {\bibfnamefont {E.}~\bibnamefont {Wigner}},\ }\bibfield  {title} {\bibinfo {title} {On unitary representations of the inhomogeneous lorentz group},\ }\href {https://doi.org/https://doi.org/10.2307/1968551} {\bibfield  {journal} {\bibinfo  {journal} {Annals of Mathematics}\ }\textbf {\bibinfo {volume} {40}},\ \bibinfo {pages} {149} (\bibinfo {year} {1939})}\BibitemShut {NoStop}%
\bibitem [{\citenamefont {Weinberg}(1995)}]{Weinberg1995}%
  \BibitemOpen
  \bibfield  {author} {\bibinfo {author} {\bibfnamefont {S.}~\bibnamefont {Weinberg}},\ }\href {https://doi.org/10.1017/CBO9781139644167} {\emph {\bibinfo {title} {The Quantum Theory of Fields}}},\ Vol.~\bibinfo {volume} {1}\ (\bibinfo  {publisher} {Cambridge University Press},\ \bibinfo {address} {Cambridge, UK},\ \bibinfo {year} {1995})\BibitemShut {NoStop}%
\bibitem [{\citenamefont {Simms}(1968)}]{Simms1968}%
  \BibitemOpen
  \bibfield  {author} {\bibinfo {author} {\bibfnamefont {D.}~\bibnamefont {Simms}},\ }\href {https://doi.org/10.1007/BFb0069914} {\emph {\bibinfo {title} {Lie Groups and Quantum Mechanics}}},\ Lecture Notes in Mathematics\ (\bibinfo  {publisher} {Springer-Verlag},\ \bibinfo {address} {Berlin, Germany},\ \bibinfo {year} {1968})\BibitemShut {NoStop}%
\bibitem [{\citenamefont {Asorey}\ \emph {et~al.}(1985)\citenamefont {Asorey}, \citenamefont {Boya},\ and\ \citenamefont {Cariñena}}]{Asorey1985}%
  \BibitemOpen
  \bibfield  {author} {\bibinfo {author} {\bibfnamefont {M.}~\bibnamefont {Asorey}}, \bibinfo {author} {\bibfnamefont {L.~J.}\ \bibnamefont {Boya}},\ and\ \bibinfo {author} {\bibfnamefont {J.~F.}\ \bibnamefont {Cariñena}},\ }\bibfield  {title} {\bibinfo {title} {Covariant representations in a fibre bundle framework},\ }\href {https://doi.org/https://doi.org/10.1016/0034-4877(85)90040-0} {\bibfield  {journal} {\bibinfo  {journal} {Reports on Mathematical Physics}\ }\textbf {\bibinfo {volume} {21}},\ \bibinfo {pages} {391} (\bibinfo {year} {1985})}\BibitemShut {NoStop}%
\bibitem [{Note1()}]{Note1}%
  \BibitemOpen
  \bibinfo {note} {Note that we use the physics convention for the Lie algebra, in which $\protect \mathfrak {so}(3)$ corresponds to unitary transformations. Mathematicians use a different convention in which $\protect \mathfrak {so}(3)$ corresponds to anti-unitary transformations. In that convention, the factors of $i$ would not appear in equations (\ref {eq:eta_g_bundle})-(\ref {eq:J_Hilbert_comm}) For a short discussion, see Section 3.4 Ref. \cite {Hall2015}.}\BibitemShut {Stop}%
\bibitem [{\citenamefont {Hall}(2013)}]{Hall2013}%
  \BibitemOpen
  \bibfield  {author} {\bibinfo {author} {\bibfnamefont {B.~C.}\ \bibnamefont {Hall}},\ }\href@noop {} {\emph {\bibinfo {title} {Quantum Theory for Mathematicians}}},\ Graduate Texts in Mathematics\ (\bibinfo  {publisher} {Springer},\ \bibinfo {address} {New York, NY, USA},\ \bibinfo {year} {2013})\BibitemShut {NoStop}%
\bibitem [{\citenamefont {Hall}(2015)}]{Hall2015}%
  \BibitemOpen
  \bibfield  {author} {\bibinfo {author} {\bibfnamefont {B.~C.}\ \bibnamefont {Hall}},\ }\href {https://doi.org/doi.org/10.1007/978-3-319-13467-3} {\emph {\bibinfo {title} {Lie Groups, Lie Algebras, and Representations: An Elementary Introduction}}},\ \bibinfo {edition} {2nd}\ ed.,\ Graduate Texts in Mathematics\ (\bibinfo  {publisher} {Springer},\ \bibinfo {address} {Cham, Switzerland},\ \bibinfo {year} {2015})\BibitemShut {NoStop}%
\bibitem [{\citenamefont {Flato}\ \emph {et~al.}(1983)\citenamefont {Flato}, \citenamefont {Sternheimer},\ and\ \citenamefont {Fronsdal}}]{Flato1983}%
  \BibitemOpen
  \bibfield  {author} {\bibinfo {author} {\bibfnamefont {M.}~\bibnamefont {Flato}}, \bibinfo {author} {\bibfnamefont {D.}~\bibnamefont {Sternheimer}},\ and\ \bibinfo {author} {\bibfnamefont {C.}~\bibnamefont {Fronsdal}},\ }\bibfield  {title} {\bibinfo {title} {{Difficulties with massless particles?}},\ }\href {https://doi.org/10.1007/BF01216186} {\bibfield  {journal} {\bibinfo  {journal} {Communications in Mathematical Physics}\ }\textbf {\bibinfo {volume} {90}},\ \bibinfo {pages} {563} (\bibinfo {year} {1983})}\BibitemShut {NoStop}%
\bibitem [{\citenamefont {Das}\ \emph {et~al.}(2024)\citenamefont {Das}, \citenamefont {Yang},\ and\ \citenamefont {Jacob}}]{Das2024}%
  \BibitemOpen
  \bibfield  {author} {\bibinfo {author} {\bibfnamefont {P.}~\bibnamefont {Das}}, \bibinfo {author} {\bibfnamefont {L.-P.}\ \bibnamefont {Yang}},\ and\ \bibinfo {author} {\bibfnamefont {Z.}~\bibnamefont {Jacob}},\ }\bibfield  {title} {\bibinfo {title} {What are the quantum commutation relations for the total angular momentum of light? tutorial},\ }\href {https://doi.org/10.1364/JOSAB.524752} {\bibfield  {journal} {\bibinfo  {journal} {J. Opt. Soc. Am. B}\ }\textbf {\bibinfo {volume} {41}},\ \bibinfo {pages} {1764} (\bibinfo {year} {2024})}\BibitemShut {NoStop}%
\bibitem [{\citenamefont {Terno}(2003)}]{Terno2003}%
  \BibitemOpen
  \bibfield  {author} {\bibinfo {author} {\bibfnamefont {D.~R.}\ \bibnamefont {Terno}},\ }\bibfield  {title} {\bibinfo {title} {Two roles of relativistic spin operators},\ }\href {https://doi.org/10.1103/PhysRevA.67.014102} {\bibfield  {journal} {\bibinfo  {journal} {Phys. Rev. A}\ }\textbf {\bibinfo {volume} {67}},\ \bibinfo {pages} {014102} (\bibinfo {year} {2003})}\BibitemShut {NoStop}%
\bibitem [{Note2()}]{Note2}%
  \BibitemOpen
  \bibinfo {note} {$s$ could also be a half integer if we were considering fermions and projective representations.}\BibitemShut {Stop}%
\bibitem [{\citenamefont {Berestetskii}\ \emph {et~al.}(1982)\citenamefont {Berestetskii}, \citenamefont {Lifshitz},\ and\ \citenamefont {Pitaevskii}}]{LandauLifshitzQED}%
  \BibitemOpen
  \bibfield  {author} {\bibinfo {author} {\bibfnamefont {V.}~\bibnamefont {Berestetskii}}, \bibinfo {author} {\bibfnamefont {E.}~\bibnamefont {Lifshitz}},\ and\ \bibinfo {author} {\bibfnamefont {L.}~\bibnamefont {Pitaevskii}},\ }\href@noop {} {\emph {\bibinfo {title} {Quantum Electrodynamics}}},\ \bibinfo {edition} {2nd}\ ed.,\ Vol.\ \bibinfo {volume} {Vol. 4}\ (\bibinfo  {publisher} {Butterworth-Heinemann},\ \bibinfo {address} {Oxford, UK},\ \bibinfo {year} {1982})\BibitemShut {NoStop}%
\bibitem [{\citenamefont {Peskin}\ and\ \citenamefont {Schroeder}(2019)}]{PeskinAndSchroeder}%
  \BibitemOpen
  \bibfield  {author} {\bibinfo {author} {\bibfnamefont {M.~E.}\ \bibnamefont {Peskin}}\ and\ \bibinfo {author} {\bibfnamefont {D.~V.}\ \bibnamefont {Schroeder}},\ }\href@noop {} {\emph {\bibinfo {title} {An Introduction to Quantum Field Theory}}}\ (\bibinfo  {publisher} {CRC Press},\ \bibinfo {address} {Boca Raton, FL, USA},\ \bibinfo {year} {2019})\BibitemShut {NoStop}%
\bibitem [{\citenamefont {Palmerduca}\ and\ \citenamefont {Qin}(2024{\natexlab{c}})}]{PalmerducaQin_helicity}%
  \BibitemOpen
  \bibfield  {author} {\bibinfo {author} {\bibfnamefont {E.}~\bibnamefont {Palmerduca}}\ and\ \bibinfo {author} {\bibfnamefont {H.}~\bibnamefont {Qin}},\ }\href {https://arxiv.org/abs/2407.03494} {\bibinfo {title} {Helicity is a topological invariant of massless particles: C=-2h}} (\bibinfo {year} {2024}{\natexlab{c}}),\ \Eprint {https://arxiv.org/abs/2407.03494} {arXiv:2407.03494 [math-ph]} \BibitemShut {NoStop}%
\bibitem [{\citenamefont {Bott}\ and\ \citenamefont {Tu}(2013)}]{Bott2013}%
  \BibitemOpen
  \bibfield  {author} {\bibinfo {author} {\bibfnamefont {R.}~\bibnamefont {Bott}}\ and\ \bibinfo {author} {\bibfnamefont {L.}~\bibnamefont {Tu}},\ }\href {https://books.google.com/books?id=COuPBAAAQBAJ} {\emph {\bibinfo {title} {Differential Forms in Algebraic Topology}}},\ Graduate Texts in Mathematics\ (\bibinfo  {publisher} {Springer},\ \bibinfo {address} {New York, NY, USA},\ \bibinfo {year} {2013})\BibitemShut {NoStop}%
\bibitem [{\citenamefont {Bott}\ and\ \citenamefont {Chern}(1965)}]{Bott1965}%
  \BibitemOpen
  \bibfield  {author} {\bibinfo {author} {\bibfnamefont {R.}~\bibnamefont {Bott}}\ and\ \bibinfo {author} {\bibfnamefont {S.~S.}\ \bibnamefont {Chern}},\ }\bibfield  {title} {\bibinfo {title} {{Hermitian vector bundles and the equidistribution of the zeroes of their holomorphic sections}},\ }\href {https://doi.org/10.1007/BF02391818} {\bibfield  {journal} {\bibinfo  {journal} {Acta Math.}\ }\textbf {\bibinfo {volume} {114}},\ \bibinfo {pages} {71} (\bibinfo {year} {1965})}\BibitemShut {NoStop}%
\bibitem [{\citenamefont {Hatcher}()}]{HatcherVBKT}%
  \BibitemOpen
  \bibfield  {author} {\bibinfo {author} {\bibfnamefont {A.}~\bibnamefont {Hatcher}},\ }\href {https://pi.math.cornell.edu/~hatcher/VBKT/VB.pdf} {\bibinfo {title} {Vector bundles and k-theory}},\ \bibinfo {howpublished} {unpublished}\BibitemShut {NoStop}%
\bibitem [{\citenamefont {Tu}(2017)}]{Tu2017differential}%
  \BibitemOpen
  \bibfield  {author} {\bibinfo {author} {\bibfnamefont {L.}~\bibnamefont {Tu}},\ }\href@noop {} {\emph {\bibinfo {title} {Differential Geometry: Connections, Curvature, and Characteristic Classes}}},\ Graduate Texts in Mathematics\ (\bibinfo  {publisher} {Springer International Publishing},\ \bibinfo {address} {Cham, Switzerland},\ \bibinfo {year} {2017})\BibitemShut {NoStop}%
\bibitem [{Note3()}]{Note3}%
  \BibitemOpen
  \bibinfo {note} {There is a two-dimensional projective UIR corresponding to spin 1/2 fermions, but this is irrelevant to the case of bosons, whose representations are non-projective.}\BibitemShut {Stop}%
\bibitem [{\citenamefont {Kohan}()}]{Khan_MSE}%
  \BibitemOpen
  \bibfield  {author} {\bibinfo {author} {\bibfnamefont {M.}~\bibnamefont {Kohan}},\ }\href {https://math.stackexchange.com/q/4886929} {\bibinfo {title} {{All actions of $\mathrm{SO}(3)$ on $S^2$ up to equivalence}}},\ \bibinfo {howpublished} {Mathematics Stack Exchange},\ \Eprint {https://arxiv.org/abs/https://math.stackexchange.com/q/4886929} {https://math.stackexchange.com/q/4886929} \BibitemShut {NoStop}%
\bibitem [{\citenamefont {Antoneli}\ \emph {et~al.}(2012)\citenamefont {Antoneli}, \citenamefont {Forger},\ and\ \citenamefont {Gaviria}}]{Antoneli2012}%
  \BibitemOpen
  \bibfield  {author} {\bibinfo {author} {\bibfnamefont {F.}~\bibnamefont {Antoneli}}, \bibinfo {author} {\bibfnamefont {M.}~\bibnamefont {Forger}},\ and\ \bibinfo {author} {\bibfnamefont {P.}~\bibnamefont {Gaviria}},\ }\bibfield  {title} {\bibinfo {title} {Maximal subgroups of compact lie groups},\ }\href {https://doi.org/10.48550/arXiv.math/0605784} {\bibfield  {journal} {\bibinfo  {journal} {Journal of Lie Theory}\ }\textbf {\bibinfo {volume} {22}},\ \bibinfo {pages} {949} (\bibinfo {year} {2012})}\BibitemShut {NoStop}%
\bibitem [{\citenamefont {Frankel}(2011)}]{Frankel2011}%
  \BibitemOpen
  \bibfield  {author} {\bibinfo {author} {\bibfnamefont {T.}~\bibnamefont {Frankel}},\ }\href@noop {} {\emph {\bibinfo {title} {The Geometry of Physics: An Introduction}}}\ (\bibinfo  {publisher} {Cambridge University Press},\ \bibinfo {address} {Cambridge, UK},\ \bibinfo {year} {2011})\BibitemShut {NoStop}%
\bibitem [{\citenamefont {Tu}(2010)}]{Tu_manifolds}%
  \BibitemOpen
  \bibfield  {author} {\bibinfo {author} {\bibfnamefont {L.~W.}\ \bibnamefont {Tu}},\ }\href {https://doi.org/10.1007/978-1-4419-7400-6} {\emph {\bibinfo {title} {An Introduction to Manifolds}}},\ Universitext\ (\bibinfo  {publisher} {Springer},\ \bibinfo {address} {New York, NY, USA},\ \bibinfo {year} {2010})\BibitemShut {NoStop}%
\bibitem [{\citenamefont {Hatcher}(2001)}]{HatcherAT}%
  \BibitemOpen
  \bibfield  {author} {\bibinfo {author} {\bibfnamefont {A.}~\bibnamefont {Hatcher}},\ }\href {https://doi.org/10.5860/choice.40-0958} {\emph {\bibinfo {title} {Algebraic Topology}}}\ (\bibinfo  {publisher} {Cambridge University Press},\ \bibinfo {address} {Cambridge, UK},\ \bibinfo {year} {2001})\BibitemShut {NoStop}%
\bibitem [{\citenamefont {Tong}(2006)}]{Tong2006}%
  \BibitemOpen
  \bibfield  {author} {\bibinfo {author} {\bibfnamefont {D.}~\bibnamefont {Tong}},\ }\href {https://www.damtp.cam.ac.uk/user/tong/qft.html} {\bibinfo {title} {Quantum field theory}},\ \bibinfo {howpublished} {unpublished} (\bibinfo {year} {2006})\BibitemShut {NoStop}%
\bibitem [{\citenamefont {Folland}(2008)}]{Folland2008}%
  \BibitemOpen
  \bibfield  {author} {\bibinfo {author} {\bibfnamefont {G.~B.}\ \bibnamefont {Folland}},\ }\href@noop {} {\emph {\bibinfo {title} {Quantum Field Theory: A Tourist Guide for Mathematicians}}},\ Mathematical surveys and monographs, vol. 149.\ (\bibinfo  {publisher} {American Mathematical Society},\ \bibinfo {address} {Providence, RI, USA},\ \bibinfo {year} {2008})\BibitemShut {NoStop}%
\end{thebibliography}%
\end{document}